\def\draft{0}
\newcommand{\N}{\mathbb{N}}
\newcommand{\Z}{\mathbb{Z}}
\newcommand{\R}{\mathbb{R}}
\newcommand{\cex}[1]{{\mu(#1)}}
\newcommand{\bcex}[1]{{\bar{\mu}(#1)}}
\renewcommand{\vec}[1]{\mathbf{#1}}
\newcommand{\monomers}{{\bf{\Psi^0}}}
\newcommand{\polymers}{{\bf{\Psi}}}
\DeclareMathAlphabet{\mathpzc}{OT1}{pzc}{m}{it}
\newcommand{\multiset}[1]{\mathpzc{#1}}
\newcommand{\sett}[1]{\mathcal{#1}}
\newcommand{\minki}[1]{\textcolor{blue}{[Minki: #1]}}
\newcommand{\DS}[1]{\textcolor{red}{[David: #1]}}
\newcommand{\HA}[1]{\textcolor{olive}{[Hamid: #1]}}
\newcommand{\todoi}[1]{\todo[inline]{#1}}
\newcommand{\minki}[1]{}
\newcommand{\DS}[1]{}
\newcommand{\HA}[1]{}
\renewcommand{\todo}[1]{}
\newcommand{\todoi}[1]{}
\title{Computing and Bounding Equilibrium Concentrations in Athermic Chemical Systems} 
\author{Hamidreza Akef}{The University of Texas at Austin, TX, USA}{hakef@utexas.edu}
{https://orcid.org/0009-0001-9253-9737}
{Support was provided by Schmidt Sciences}
\author{Minki Hhan}{The University of Texas at Austin, TX, USA}{minki.hhan@austin.utexas.edu}{https://orcid.org/0000-0001-5143-4587}{Support was provided by Schmidt Sciences}
\author{David Soloveichik}{The University of Texas at Austin, TX, USA \and \url{https://www.solo-group.link/}}{david.soloveichik@utexas.edu}{https://orcid.org/0000-0002-2585-4120}{Support was provided by Schmidt Sciences, Department of Energy award DE-SC0024467, and National Science Foundation SemiSynBio III: GOALI award}
\authorrunning{H. Akef and M. Hhan and D. Soloveichik} 
\keywords{Equilibrium concentrations, Thermodynamic Binding Networks, Monomer-polymer model, Detailed balance}
\begin{document}

\maketitle

\begin{abstract}
Computing equilibrium concentrations of molecular complexes is generally analytically intractable and requires numerical approaches. 
In this work we focus on the polymer-monomer level, where indivisible molecules (monomers) combine to form complexes (polymers). 
Rather than employing free-energy parameters for each polymer, we focus on the athermic setting where all interactions preserve enthalpy. 
This setting aligns with the strongly bonded (domain-based) regime in DNA nanotechnology when strands can bind in different ways, but always with maximum overall bonding---and is consistent with the saturated configurations in the Thermodynamic Binding Networks (TBNs) model. 
Within this context, we develop an iterative algorithm for assigning polymer concentrations to satisfy detailed-balance, where on-target (desired) polymers are in high concentrations and off-target (undesired) polymers are in low.
Even if not directly executed, our algorithm provides effective insights into  upper bounds on concentration of off-target polymers,
connecting combinatorial arguments about discrete configurations such as those in the TBN model to real-valued concentrations.
We conclude with an application of our method to decreasing leak in DNA logic and signal propagation.
Our results offer a new framework for design and verification of equilibrium concentrations when configurations are distinguished by entropic forces.
\end{abstract}

\section{Introduction}
In general, chemical equilibria of complex chemical systems are not analytically solvable and numerical tools are required for analysis.
Such tools include NUPACK~\cite{zadeh2011nupack} for thermodynamic analysis of nucleic-acid systems, as well as more abstract platforms that support domain-level abstraction and free energy specification~\cite{concentratio} including via rule-based modeling~\cite{sekar2016energy},
and software for computing steady-state concentrations of chemical reaction networks~\cite{hoops2006copasi,leal2015reaktoro}.
However, engineers often want a deeper understanding of the equilibrium than what analytically opaque numerical calculations can provide.
Moreover, we often seek to understand infinite classes of designs such as logic circuits constructed from gate modules or parameterized constructions. 
For example, there is a growing body of work on leak in DNA-based systems, exhibiting a family of schemes parameterized by a ``redundancy parameter'' meant to decrease leak arbitrarily at the cost of additional system components~\cite{DBLP:conf/dna/ThachukC15,DBLP:conf/dna/DotyRSTW17,DBLP:journals/pnas/WangB18,DBLP:journals/tcbb/BreikCDHS21,wang2023speed,Wang2024.09.13.612990}. 
It is proven that producing off-target species necessarily decreases the overall number of separate complexes---the change controlled by the redundancy parameter---and thus incurs an entropic penalty.
However, due to system complexity, the relationship between this rigorously proven thermodynamic unfavorability and the actual concentrations of off-target species often remains implicit.

Properties of equilibria of abstract coupled chemical reactions have been extensively explored in the chemical reaction network theory literature~(e.g.,~\cite{feinberg2019foundations}, Chapter 14, for detailed-balance equilibria),
including establishing the conditions on the rate constants and structure of the networks necessary for detailed balance~\cite{feinberg1989necessary,dickenstein2011far}.
Full explicit-parameter schemes have been developed to characterize the entire space of equilibria (e.g.,~\cite{johnston2019deficiency}), but these analytical approaches have severe limitations.  
For large systems the resulting explicit formulas are unwieldy and offer no effective guidance on how to choose the parameters so that off-target concentrations remain below a desired bound.

In this paper 
we are interested in the following problem: 
Our chemical species are complexes (termed polymers) made up of indivisible units called monomers. 
We assume there are finitely many possible polymers,
and among these we are given a set of on-target species that we want to have  desired equilibrium concentrations, and all other off-target species to have some sufficiently low equilibrium concentrations.
Our task is to determine a consistent detailed-balance equilibrium, and therefore the concentrations of the monomers that would lead to this equilibrium.
Note that our interest in setting equilibrium concentrations rather than solving for them (based on initial concentrations or total monomer concentrations, for example) is misaligned with most computational approaches.
For instance, it can be seen as the reverse problem to NUPACK 3's {\tt concentrations} tool which takes the concentrations of the monomers (strands) and returns the corresponding equilibrium concentrations of the polymers (complexes).

While having the flexibility to set monomer concentrations may appear to simplify the problem in the same way that finding \emph{some} detailed-balance equilibrium is easier than computing the one consistent with input monomer concentrations, 
the complexity comes from simultaneously ensuring that off-target polymers remain in low concentration. 
Consider the natural approach of taking logarithms of all concentrations, thereby converting the detailed-balance equations (balancing each reaction) into a linear system amenable to standard linear solvers.
When we fix the concentrations of on-target polymers, the remaining (off-target) concentrations are typically under-determined. 
The linear system then describes an unbounded affine subspace, and there is no obvious way to extract upper bounds on off-target species' concentrations without leaving the linear framework.
Our solution is an iterative algorithm that assigns concentrations to off-target polymers in decreasing order of concentration, ensuring that each off-target species remains below a desired threshold concentration when possible. Importantly, terminating the algorithm at any iteration still provides valid upper bounds for all remaining off-target polymers.

In the most general formulation of the monomer-polymer equilibrium concentration problem, the polymer free energies can be assigned arbitrarily incorporating binding strength, geometric constraints, etc.
In this work, we focus on the simpler \emph{athermic} case rather than tackling the problem in its full generality.
Our allowed polymers are such that all possible reactions between them are enthalpy-neutral.
This model is consistent with systems of strong fully-complementary DNA domains in which domain-level bonds can only switch binding partners but not de-hybridize. 
Thermodynamic Binding Network (TBN)~\cite{DBLP:conf/dna/DotyRSTW17,DBLP:journals/tcs/BreikTHS19} saturated configurations capture this condition, but our setting is more general without a built-in notion of domains (binding sites).

The main result of this paper is \cref{alg1} and \cref{theorem:level-i} showing how starting with desired concentrations of on-target polymers (already in detailed balance), we can set the concentrations of off-target polymers to satisfy detailed balance and thus thermodynamic equilibrium.
In \cref{sec:nontrivial} we explain the apparent difficulties in balancing reactions which our approach needed to overcome.

If, rather than computing exact concentrations of off-target polymers, it is sufficient to bound them, then we refer the reader to \cref{sec:bounding-framework}.
In \cref{sec:TBN-applications} we apply our framework to the analysis of systems in the TBN model specifically, connecting the combinatorial notions of stability and entropy loss in the TBN model to equilibrium concentrations.
In \cref{sec:Examples}, we show applications of our method to the analysis of a simple TBN AND gate, as well as a parameterized family of signal propagation systems (translator cascades) from prior work. 
For the translator cascade, we argue that tuning concentrations of on-target polymers according to our framework is essential for leak to decrease exponentially with the redundancy parameter.
We conclude with a discussion of future work (\cref{sec:discussion}), including a formulation of new combinatorial conditions in the TBN model to make our framework easily applicable.

\section{Model}

Let $\N$ denote the set of nonnegative integers.
Given a finite set $\sett{A}$, we define $\N^\sett{A}$ as the set of functions $f : \sett{A} \to \N$.

A multiset $\multiset{M}$ over the finite set $\sett{A}$ is described by its \emph{counting function} $f_{\multiset{M}} \in \N^\sett{A}$, where for each element $a \in \sett{A}$, the value $f_{\multiset{M}}(a)$ indicates how many times $a$ appears in $\multiset{M}$.
We often write $\multiset{M} \in \N^\sett{A}$ to mean that $\multiset{M}$ is a multiset over $\sett{A}$, and we denote the count of $a \in \sett{A}$ in $\multiset{M}$ by $\multiset{M}[a]$.
The notation $a\in \multiset{M}$ means that $\multiset{M}[a]\ge 1$.
The \emph{cardinality} of a multiset $\multiset{M}$, denoted $|\multiset{M}|$, is the total number of elements in the multiset
$|\multiset{M}|=\sum_{a\in \sett{A}} \multiset{M}[a]$.
For two multisets $\multiset{M}$ and $\multiset{M'}$ over $\sett{A}$, we define their union $\multiset{M} + \multiset{M'}$ as the multiset whose counting function is the pointwise sum
$(\multiset{M}+\multiset{M'})(a) = \multiset{M}[a]+\multiset{M'}[a]
\text{ for all }a \in \sett{A}$.
For example, let $\multiset{M} = \{a, a, b, c\}$. 
Then,
$\multiset{M}[a]=2$, $\multiset{M}[b] = 1$, $\multiset{M}[c] = 1$, and $\multiset{M}[d] = 0$ for all $d \notin \{a, b, c \}$.
Also, the cardinality of $\multiset{M}$ is $|\multiset{M}| = 2 + 1 + 1 = 4$.
Finally, note that $\multiset{M}$ could also be written as the union $\{a, b\} + \{a, c\}$, for example.

The linear combination of multisets with nonnegative integers is defined analogously: For multisets $\multiset{M_1},...,\multiset{M_n}$ and $a_1,...,a_n \in \N$, $\sum_{i=1}^n a_i \cdot \multiset{M_i}$ corresponds to the counting function $\sum_{i=1}^n a_i \cdot   \multiset{M_i}[a]$.
Let $\multiset{M_1}, \multiset{M_2} \in \N^\sett{S}$ be two multisets over the same set $\sett{S}$.
The difference $\multiset{M_1} - \multiset{M_2}$ is defined as the multiset $\multiset{M} \in \N^\sett{S}$ such that for every $a \in \sett{S}$,
$(\multiset{M_1} - \multiset{M_2})[a] = \multiset{M_1}[a] - \multiset{M_2}[a]$
provided that $\multiset{M_1}[a] \ge \multiset{M_2}[a]$ for all $a \in \sett{S}$.

We also define the intersection of a multiset with a \emph{set}.
Given a multiset $\multiset{M}$ over $\sett{A}$ and a subset $\sett{S} \subset \sett{A}$, the intersection $\multiset{M} \cap \sett{S}$ is a \emph{multiset} over $\sett{A}$ defined by
$(\multiset{M} \cap \sett{S})[a] = \multiset{M}[a]$ if $a \in \sett{S}$, otherwise $(\multiset{M} \cap \sett{S})[a]=0$.
For instance, 
$\{a,a,b,c\} \cap \{a,c\} = \{a, a, c\}$.

The main object of this paper is an abstract model of monomers and polymers, motivated by systems in DNA nanotechnology. This model captures how simple indivisible molecules (monomers) combine to form complexes (polymers) under specific physical and chemical constraints.

\begin{definition}\label{def:polymer}
Let $\monomers$ be a finite set of \emph{monomers}, and $\polymers \subseteq \N^{\monomers}$ be a finite set of \emph{polymers} over these monomers, where 
each polymer $P \in  \polymers$ is a multiset of monomers.
\end{definition}
Let $\vec{x^0} \in (0,1)^{\monomers}$ represent the vector of concentrations for all monomers, and let $\vec{x} \in (0,1)^{\polymers}$ represent the vector of concentrations for all polymers (also called configuration).
The relationship between monomer and polymer concentrations is governed by mass conservation. Specifically, we require 
\begin{equation}\label{eqn:constraints}
    \vec{x^0} = \vec{A} \cdot \vec{x} 
\end{equation}
where $\vec{A} \in \N^{|\monomers| \times |\polymers|}$ is a matrix such that each entry $A_{ij}$ specifies the number of monomers of type $i$ in polymer $j$.

For example, the polymer \( P = \{m_1, m_1, m_2, m_3 \} \) contains two copies of \( m_1 \), one of \( m_2 \), and one of \( m_3 \).
Note that we will be interested in cases where the set of polymers of interest $\polymers$ is a finite (proper) subset of all possible polymers over $\monomers$.

In DNA nanotechnology the monomers are typically DNA strands with different sequences. 
Polymers are analogous to a multistranded DNA structure composed of multiple DNA strands.
We use the term polymer rather than ``complex'' in order to better emphasize their composition from monomers and to be consistent with the TBN literature.

To model the equilibrium behavior of such systems, we use the free energy formulation in the notation of Dirks et al.~\cite{DBLP:journals/siam/driksBSWP07},
where the equilibrium concentrations are obtained by minimizing the following free energy function (corresponding to the pseudo-Helmholtz free energy used throughout chemical reaction network theory literature~\cite{horn1972general,feinberg2019foundations}):
\begin{equation}\label{eq:energy_function}
    \mathbf{g}(\vec{x}) = \sum_{P \in \polymers}{x_P (\log x_P - \log \Omega_P - 1)}
\end{equation}
where $x_P$ denotes the concentration of polymer $P$, and $\Omega_P$ is its partition function corresponding to the exponential of the polymer's negative free energy. The minimization is subject to the mass conservation constraint given in \cref{eqn:constraints}.

In this work, we focus on \emph{athermic} systems where all interactions are equally favored enthalpically. 
Thus we assume that $\sum_{P \in \polymers} x_P \cdot \log \Omega_P$ is constant for all configurations $\vec{x}$ satisfying mass conservation (\cref{eqn:constraints}), 
yielding a simplified cost function that is entirely entropic:
\begin{equation}\label{eq:simple_energy}
    g(\vec{x}) = \sum_{P \in \polymers}{x_P (\log x_P - 1)}.
\end{equation}
This function serves as the objective of our optimization problem. Its minimizer under the constraint of \cref{eqn:constraints} corresponds to the equilibrium concentration of polymers.

Understanding the equilibrium requires us to formalize how polymers can transform into one another. We do so by defining reconfigurations and the reactions they induce.

\begin{definition}\label{def:reconfiguration}
    Two multisets of polymers $\multiset{M_1}, \multiset{M_2} \in \N^{\polymers}$ are \emph{reconfigurations} of each other, written $\multiset{M_1} \cong \multiset{M_2}$, if for every monomer $m \in \monomers$, the total count of $m$ is the same in $\multiset{M_1}$ as in $\multiset{M_2}$; i.e., $\sum_{P \in {\polymers}} \multiset{M_1}[P] \cdot P[m] = \sum_{P \in {\polymers}} \multiset{M_2}[P] \cdot P[m]$.
    Whenever $\multiset{M_1} \cong \multiset{M_2}$, we also define \emph{reaction} $\multiset{M_1} \to \multiset{M_2}$.
\end{definition}

We occasionally use Greek alphabets to denote the reactions.
Note that while $\cong$ is a binary relation, a reaction $\alpha:\multiset{M_1} \to \multiset{M_2}$ is an ordered pair representing a directed transformation between the multisets.
Our arguments will refer to \emph{reactants} (left-hand side) and \emph{products} (right-hand side) of particular reactions.\footnote{All reactions are of course reversible, but we treat each direction as a separate reaction.}

An important property of the minimum of \cref{eq:energy_function,eq:simple_energy} is that it satisfies detailed balance over reactions~\cite{horn1972general,feinberg2019foundations}.
For us (\cref{eq:simple_energy}), for any single reaction $\alpha : \multiset{M_1} \to \multiset{M_2}$, the equilibrium concentrations satisfy:
\[
    \prod_{P\in \multiset{M_1}} {x_P}^{\multiset{M_1}[P]} = \prod_{P\in \multiset{M_2}} {x_P}^{\multiset{M_2}[P]}.
\]
We say that a reaction $\alpha$ is balanced when this equality holds. 
This notion of balance will be central to our characterization of equilibrium concentrations formed by on-target polymers, which we explore next.

The fact that balance leads to the minimum of the free energy is well-established, and 
we present its sketch in \cref{appndx:equilibrium} for the completeness of the paper.

\section{Characterizing On-target Polymers}

We now define the set $\sett{S}$ of on-target polymers---intuitively, these are the high-concentration polymers whose equilibrium concentration we set as input to our algorithm.

\begin{definition}\label{def:level0}
    Let $\sett{S} \subseteq \polymers$ be a set of polymers, and let $\mu: \sett{S} \to (0,1]$ be a function assigning a \emph{concentration exponent} to each polymer in $\sett{S}$.
    We say that $\sett{S}$ is \emph{on-target} with concentration exponents $\mu$ if: 
    \begin{enumerate}
        \item For every polymer $P \in \polymers$, there are multisets of polymers $\multiset{M} \in \N^\sett{S}$ and $\multiset{M'} \in \N^{\polymers}$ where $\multiset{M'}$ contains $P$ such that $\multiset{M} \cong \multiset{M'}$.
        \item For any two multisets $\multiset{M_1}, \multiset{M_2} \in \N^\sett{S}$, if $\multiset{M_1} \cong \multiset{M_2}$, then their concentration exponents are equal $\cex{\multiset{M_1}} = \cex{\multiset{M_2}}$.
        Here, the concentration exponent of a multiset $\multiset{M}$ is defined as $\cex{\multiset{M}} = \sum_{P \in \multiset{M}} \multiset{M}[P] \cdot \cex{P}$.
    \end{enumerate}
\end{definition}

While presented in terms of restricting $\sett{S}$, another (perhaps more apropos) interpretation of the first condition is that we are restricting $\polymers$ to be only those polymers that can be obtained via a reconfiguration of polymers over $\sett{S}$.
The interpretation of the second condition is that every reaction among polymers in $\sett{S}$ is in detailed balance. More precisely:
\begin{remark}\label{remark: supposed-concentration}
    Let $0 < c < 1$, and suppose every on-target polymer $P \in \sett{S}$ has concentration $c^\cex{P}$. 
    Then every reaction $\multiset{M_1} \to \multiset{M_2}$, where $\multiset{M_1}, \multiset{M_2} \in \N^\sett{S}$, satisfies $c^\cex{\multiset{M_1}} = c^\cex{\multiset{M_2}}$ and thus the product of the concentrations of the left-hand side polymers is equal to the product of the concentrations of the right-hand side polymers.\footnote{Note that we use symbol $\mu$ for concentration exponents because of the direct parallel to the standard notion of chemical potential, which is expressed with logarithmic terms of concentration: chemical potential $\mu = \mu^\circ + RT \ln(x)$, where $x$ is the mole fraction.}\footnote{Our base concentration $c$ is smaller than one because the units of concentration are mole fractions---the ratio of polymers to solvent molecules, and 
    the derivation of the energy function \cref{eq:energy_function} only holds in the regime of less polymer than solvent~\cite{DBLP:journals/siam/driksBSWP07}.
    } 
\end{remark}

If $\sett{S} = \polymers$ then we are done, being guaranteed that all reactions are in detailed balance. 
However, the case of interest is where we do not know the concentration exponents of all polymers---making it the goal of this paper to compute them or bound them to ensure that the equilibrium concentration of off-target polymers is small.

We now define canonical reactions, which are the reactions with reactants from $\sett S$.
These reactions can generate polymers outside of $\sett{S}$ from polymers in $\sett{S}$.
Each canonical reaction has key quantities we call imbalance and novelty which will play a key role in our algorithm.

\begin{definition}\label{def:canonical}
    For an on-target set $\sett{S}$ with $\mu$, 
    a reaction $\alpha: \multiset{M_1} \to \multiset{M_2}$ is called \emph{canonical} if all its reactants are over $\sett{S}$ (i.e., $\multiset{M_1} \in \N^\sett{S}$).  
    Then the \emph{imbalance} of $\alpha$ is defined as $k(\alpha) := \cex{\multiset{M_1}}-\cex{\multiset{\hat{M_2}}}$, where $\hat{\multiset{M_2}}:=\multiset{M_2} \cap \sett{S}$.
    The \emph{novelty} of $\alpha$ is defined by $l(\alpha) := |\multiset{M_2} - \multiset{\hat{M_2}}|$. 
\end{definition}
Given the definition of multiset subtraction, the novelty $l(\alpha)$ is always non-negative. 

Intuitively, the imbalance of the reaction represents how far from detailed balance it is \emph{if we consider only the polymers whose concentrations are known}.
The novelty is the number of off-target (new) polymers produced by the same reaction.

Intuitively, a large $k(\alpha)$ means a larger bias of the reaction toward its reactants, i.e., elements of $\sett{S}$, prior to assigning concentration exponents to off-target polymers.
This is desired since it implies less pressure to make off-target polymers and gives us more room to maneuver in assigning concentrations to them.

Intuitively, a larger novelty $l(\alpha)$ means that the canonical reaction is more entropically favored to its products. 
Note that the term $\sum_{P \in \polymers}{x_P \log x_P}$ in \cref{eq:simple_energy} is minimized when the concentration is more ``spread out'' over the different species (like Shannon entropy), and thus there is an entropic force to generate new polymers.
A large $l(\alpha)$ is thus undesired.
As justified in the subsequent results, the ratio $k(\alpha)/l(\alpha)$ captures the effective tradeoff between imbalance and novelty.

Note that summing canonical reactions gives another canonical reaction.
While on the one hand this leads to infinitely many canonical reactions,
on the other hand, this additive property allows us to analyze the set of all canonical reactions using a Hilbert basis of \emph{finite size}.
We explain how to employ the Hilbert basis in our algorithm to avoid the infinity of canonical reactions in \cref{subsec:Hilbert}.

The notion of stability of $\sett{S}$, defined below, captures the idea that on-target polymers are in higher concentration compared to off-target polymers. 
Recall that concentration exponents $\mu$ are  $\leq 1$ for polymers in $\sett{S}$ (\cref{def:level0}). 
As shown later by \Cref{theorem:level-i}, 
the following definition ensures that all concentration exponents of polymers outside of $\sett{S}$ are greater than $1$ by our algorithm.
Since the base concentration $c < 1$, this implies that off-target polymers are in smaller concentration.

\begin{definition}\label{def:stable}
    The on-target set $\sett{S}$ with $\mu$ is called \emph{stable} if, for every canonical reaction $\alpha$ with $l(\alpha)\neq 0$, the ratio $k(\alpha) / l(\alpha) > 1$.
\end{definition}

While our formalism allows different concentration exponents for different polymers in $\sett{S}$, nearly all insight can be obtained from the simple case of uniform on-target concentration exponents:

\begin{definition}\label{def:uniform}
    The on-target set $\sett{S}$ with $\mu$ is called 
    \emph{uniform} if every polymer $P \in \sett{S}$ has concentration exponent $\cex{P} = 1$.
\end{definition}

\section{Why Balancing is Nontrivial} \label{sec:nontrivial}
Our main goal is to ensure that in equilibrium the concentration of on-target polymers is much higher than the others. 
This section presents two examples illustrating why this is nontrivial.

Consider a uniform set of on-target polymers $\sett S = \{A,B,C,D\}$, aiming at the equilibrium concentration $x_A=x_B=x_C=x_D=c$. For two off-target polymers $P_1$ and $P_2$, suppose that we figured out a canonical reaction
\[    
    \alpha:2A+B+C \to P_1 + P_2.
\]
This reaction is in detailed balance if and only if $x_{P_1} \cdot x_{P_2}=x_A^2 \cdot x_B \cdot x_C = c^4.$ 
At this point, it is unclear how to balance $x_{P_1}$ and $x_{P_2}$ without inspecting the other reactions:
For example, the reaction $\alpha$ is balanced by assigning $x_{P_1}=x_{P_2}=c^{2}$, 
but another canonical reaction $A+2B+2D \to 2P_1$ would not be if it exists.

The following example exhibits a different potential problem.
While \cref{remark: supposed-concentration} shows that reactions entirely within $\sett{S}$ are balanced, it is not clear that given our choice of concentration exponents $\mu$ for $\sett{S}$, the reactions involving polymers outside of $\sett{S}$ could be balanced at all.
Suppose there are two reactions
\[
    \beta:A+B \to P_3 \text{ and }\gamma:B+C+D \to 2P_3.
\]
These reactions intuitively say that the off-target polymer $P_3$ is non-interacting to other polymers in these reactions so that the above problem of balancing the concentrations over off-target polymers is absent.\footnote{Looking ahead, these non-interacting polymers are indeed easier to assign concentrations to as shown in \cref{subsec:non-interacting}.}
However, balancing the reaction $\beta$ suggests the concentration $x_{P_3} = c^2$ at equilibrium, but the reaction $\gamma$ gives $x_{P_3} = c^{1.5}$.
In other words, it is unclear if the concentrations of each of the polymers in $\sett S$ can be the same or even close to each other.

Despite these difficulties, our main result guarantees that the configuration on on-target polymers we demand is in equilibrium without the above issue, and 
can be \emph{extended} to the configuration over all polymers at equilibrium, and ensures that the off-target polymers remain in very low concentration at equilibrium.
Intuitively, we prove that there exists a canonical reaction for which we can assign concentrations to product polymers without creating conflicts with other reactions, and we provide a method to find this reaction.
Moreover, as we will see later, these assignments occur in order of decreasing concentration allowing us to restrict the concentration of off-target species.

\section{Main Result: Concentration of Off-target Polymers}
For the remainder, we fix $\monomers$, $\polymers$, and a particular set of on-target polymers $\sett S$ with concentration exponents $\mu$.

To systematically assign concentration exponents to all polymers, we will organize them into hierarchical groups called levels. The process begins with a designated set of polymers $\sett S$, the on-target polymers, which are assigned initial concentration exponents via $\mu: \sett S \to \mathbb{R}^+$.

All other polymers, called off-target polymers, will be partitioned into level sets $\sett S_1, \sett S_2, \ldots$, constructed inductively based on how these polymers appear in certain canonical reactions. At each level $i$, we will compute a scalar value $\mu_i$ that serves as the concentration exponent for all polymers newly added at that level.

In this way, we gradually extend the initial function $\mu$ to a global function $\bar\mu: \polymers \to \mathbb{R}^+$ that assigns a concentration exponent to every polymer in the system. This inductive process and the precise definitions of $\mu_i$, $\sett S_i$, and $\bar\mu$ will be described in detail below.

\begin{definition}\label{def:imbalance_novelty}
    Given $\sett S_0,...,\sett S_{i-1}$, assume $\bcex{P}$ is assigned for any $P \in \bigcup_{j=0}^{i-1} \sett S_{j}$.
    For a canonical reaction $\alpha:\multiset{M_1} \to \multiset{M_2}$, 
    we define $\multiset{\hat{M_2}}:= \multiset{M_2} \cap\left(\bigcup_{j=0}^{i-1} \sett S_{j}\right)$. 
    The $i$th-level \emph{imbalance} of $\alpha$ is defined as $k_i(\alpha) := \bcex{\multiset{M_1}}-\bcex{\multiset{\hat{M_2}}}$, and
    the $i$th-level \emph{novelty} by $l_i(\alpha) := |\multiset{M_2}| -|\multiset{\hat{M_2}}|$.
\end{definition}
While $\multiset{\hat{M_2}}$ technically depends on the level index $i$, we suppress this dependency in the notation for simplicity; it will be clear from the context that it is updated at each level. 
Additionally, by the definition of a canonical reaction, we have $\bcex{\multiset{M_1}} = \cex{\multiset{M_1}}$ in the expression above.

\begin{definition}\label{def:ratio}
    Let $\mu_i = \min_\alpha \{k_i(\alpha)/l_i(\alpha)\}$ where the minimum is taken over all canonical reactions $\alpha$ with $l_i(\alpha)\neq 0$.\footnote{The minimum can actually be taken over a finite subset of canonical reactions using Hilbert basis. We refer the reader to \cref{subsec:Hilbert} for more detail.}
    The canonical reactions achieving the minimum are termed \emph{$i$-levelizing} canonical reactions. 
    The \emph{$i$th level set} $\sett S_i$ is defined as the set of all polymers $P$ appearing in any $i$-levelizing reactions not already in $\bigcup_{j=0}^{i-1} \sett S_{j}$.
    Every polymer $P \in \sett S_i$ is assigned concentration exponent $\bcex{P} := \mu_i$.
\end{definition}

At the heart of this inductive construction is the requirement that each $i$-levelizing reaction maintains balance with respect to the assigned concentration exponents. That is, for every reaction $\alpha: \multiset{M_1} \to \multiset{M_2}$ used to define level $\sett S_i$, the assigned exponents ensure the reaction remains detailed balanced, $c^{\bcex{\multiset{M_1}}} = c^{\bcex{\multiset{M_2}}}$.
This holds because each polymer in $\multiset{M_2}$ that already appeared in previous levels contributes its already-defined exponent, while newly introduced polymers in $\sett S_i$ all receive the same value $\mu_i$. Decomposing $\multiset{M_2}$ into previously levelized polymers $\multiset{\hat M_2}$ and new polymers in $\sett S_i$, we obtain:
    \[
        {\bcex{\multiset{M_1}}} 
        = \bcex{\multiset{\hat{M}_2}} + k_i(\alpha)
        = \bcex{\multiset{\hat{M}_2}} + \mu_i \cdot l_i(\alpha) 
        =\bcex{\multiset{\hat{M}_2}} + \mu_i\cdot (|\multiset{M_2}| - |\multiset{\hat M_2}|)
        = \bcex{\multiset{M_2}},
    \]
ensuring that the total concentration exponent on both sides of the reaction is equal. This confirms that the assignment of $\mu_i$ for polymers in level $\sett S_i$ is consistent with the balance requirements dictated by the underlying reactions.

The full procedure for determining the level sets $\sett S_i$ and the corresponding concentration exponents $\mu_i$ is formally specified in \cref{def:imbalance_novelty,def:ratio} and carried out through the step-by-step process described in \cref{alg1}. This algorithm inductively builds the extended concentration exponent function $\bar\mu$ by identifying polymers that can be levelized at each step and assigning them an appropriate exponent value based on their participation in canonical reactions.

While the algorithm itself does not explicitly state a stopping condition, its termination is ensured by the structure of the level construction process. Each time a new level $\sett S_i$ is defined, it includes at least one polymer not present in any previous level. Since the set of all polymers $\polymers$ is finite by assumption,
only finitely many levels can be introduced. As a result, the inductive process must terminate after assigning a level and concentration exponent to each polymer, thus completing the definition of the extended function $\bar\mu$.

For us, canonical reactions represent the simplest meaningful interactions and serve as building blocks for more complex behavior. 
Their central role for our results is motivated by the following lemma:

\begin{lemma}\label{lem:any_to_canonical}
    Let $\vec {x^0} \in (0,1)^{\monomers}$ be a vector of concentrations of the monomers.
    If all canonical reactions are balanced at configuration $\vec x \in (0,1)^{\polymers}$,
    then the cost function $g(\vec x)$ is minimum subject to $\vec A \cdot \vec x =\vec {x^0}$.
\end{lemma}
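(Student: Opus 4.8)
The plan is to reduce to the already-established fact (sketched in \cref{appndx:equilibrium}) that any configuration in detailed balance over \emph{all} reactions minimizes $g$ subject to mass conservation. So the real work is to show that the hypothesis of the lemma --- every \emph{canonical} reaction is balanced at $\vec{x}$ --- actually forces \emph{every} reaction $\multiset{N_1}\to\multiset{N_2}$ (with $\multiset{N_1},\multiset{N_2}\in\N^{\polymers}$ and $\multiset{N_1}\cong\multiset{N_2}$) to be balanced at $\vec{x}$.

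The engine is a padding-and-cancellation trick: suppose I can produce $\multiset{J}\in\N^{\polymers}$ and $\multiset{K}\in\N^{\sett{S}}$ with both $\multiset{K}\cong\multiset{N_1}+\multiset{J}$ and $\multiset{K}\cong\multiset{N_2}+\multiset{J}$. Then $\multiset{K}\to\multiset{N_1}+\multiset{J}$ and $\multiset{K}\to\multiset{N_2}+\multiset{J}$ are both \emph{canonical} (their reactant multiset $\multiset{K}$ is over $\sett{S}$), hence balanced by hypothesis; equating the two balance identities and cancelling the common factor $\prod_{P}x_P^{\multiset{J}[P]}$ (valid since every $x_P$ is positive) yields exactly balance of $\multiset{N_1}\to\multiset{N_2}$. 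Constructing $\multiset{J}$ and $\multiset{K}$ is the only nontrivial step, and the only place the on-target structure is used; I expect the work to be here. The tool is Condition~1 of \cref{def:level0}: for each polymer $P$ pick $\multiset{M_P}\in\N^{\sett{S}}$ and $\multiset{M'_P}\ni P$ with $\multiset{M_P}\cong\multiset{M'_P}$, and write $\multiset{M'_P}=\{P\}+\multiset{R_P}$ (valid multiset subtraction since $\multiset{M'_P}[P]\ge 1$), so $\{P\}+\multiset{R_P}\cong\multiset{M_P}$. As $\cong$ is preserved under multiset addition and nonnegative-integer scaling (monomer counts add), for \emph{any} $\multiset{N}\in\N^{\polymers}$ we get $\multiset{N}+\sum_{P}\multiset{N}[P]\cdot\multiset{R_P}\cong\sum_{P}\multiset{N}[P]\cdot\multiset{M_P}\in\N^{\sett{S}}$ --- i.e.\ any multiset becomes reconfigurable to one over $\sett{S}$ after adding suitable padding. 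Apply this to $\multiset{N}:=\multiset{N_1}+\multiset{N_2}$ to obtain padding $\multiset{J_0}$ and $\multiset{K_0}\in\N^{\sett{S}}$ with $\multiset{N_1}+\multiset{N_2}+\multiset{J_0}\cong\multiset{K_0}$; then $\multiset{J}:=\multiset{N_2}+\multiset{J_0}$ and $\multiset{K}:=\multiset{K_0}$ work, since $\multiset{N_1}+\multiset{J}=\multiset{N_1}+\multiset{N_2}+\multiset{J_0}\cong\multiset{K}$ and, using $\multiset{N_1}\cong\multiset{N_2}$, also $\multiset{N_2}+\multiset{J}=2\multiset{N_2}+\multiset{J_0}\cong\multiset{N_1}+\multiset{N_2}+\multiset{J_0}\cong\multiset{K}$.

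Putting this together, every reaction is balanced at $\vec{x}$, and the cited equilibrium characterization immediately gives that $g(\vec{x})$ is minimal subject to $\vec{A}\cdot\vec{x}=\vec{x^0}$. (If one wanted to avoid invoking that characterization: taking logarithms, ``all reactions balanced at $\vec{x}$'' says $\log\vec{x}$ is orthogonal to $\ker\vec{A}$ --- the reaction vectors $\multiset{N_1}-\multiset{N_2}$ are precisely the integer vectors of $\ker\vec{A}$, and they span it --- i.e.\ $\log\vec{x}\in\mathrm{im}\,\vec{A}^{\top}$; since $\nabla g(\vec{x})=\log\vec{x}$ componentwise, this is exactly the first-order (KKT) stationarity condition for the convex program $\min g$ subject to $\vec{A}\vec{x}=\vec{x^0}$, which is sufficient for a global minimum because $g$ is convex on the affine feasible set. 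I would mention this as a sanity check but rely on the stated result.)
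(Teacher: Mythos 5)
Your proof is correct, and it shares the paper's overall strategy---use Condition~1 of \cref{def:level0} to reduce balance of an arbitrary reaction to balance of canonical reactions, then invoke the equilibrium characterization of \cref{appndx:equilibrium}---but the reduction itself is executed differently. The paper works reactant-by-reactant: given a non-canonical $\alpha:\multiset{M_1}+P\to\multiset{M_2}$ with $P\notin\sett{S}$, it picks a canonical $\beta$ producing $P$, runs $\beta$ in reverse as a catalyst to eliminate $P$ from the reactants, and repeats until all reactants lie in $\sett{S}$, so that balance of $\alpha$ is recovered by an induction that divides out the balance identities of $\beta$ and of the resulting reaction. Your version is a one-shot, symmetric variant: you pad both sides of $\multiset{N_1}\to\multiset{N_2}$ by a common $\multiset{J}$ so that a single multiset $\multiset{K}\in\N^{\sett{S}}$ is a reconfiguration of both padded sides, obtain two canonical reactions out of $\multiset{K}$, and cancel the common factor $\prod_P x_P^{\multiset{J}[P]}>0$. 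This buys an induction-free argument in which the key step---why balance of the canonical reactions forces balance of the original reaction---is fully explicit, whereas the paper leaves it implicit in the phrases ``catalyzed by $\multiset{M_2'}$'' and ``by repeating this procedure.'' The paper's iterative version in turn makes visible the decreasing measure (number of off-target reactants) that drives its induction. Both arguments are sound, and your construction of the padding via $\multiset{M'_P}=\{P\}+\multiset{R_P}$ and the linearity of $\cong$ under multiset sums is a correct use of Condition~1; the closing KKT sanity check likewise matches the appendix.
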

\begin{proof}
    We will show that any arbitrary reaction can be decomposed into a canonical reaction. Consequently, if detailed balance holds for all canonical reactions, it follows that detailed balance holds for all reactions. 
    Per \cref{appndx:equilibrium}, the cost function $g$ reaches its minimum---under the constraint $\vec A \cdot \vec x =\vec {x^0}$---when all reactions are balanced. This confirms the statement of the lemma.

    Consider an arbitrary non-canonical reaction $\alpha:\multiset{M_1} + P\to \multiset{M_2}$ with $P \notin \sett{S}$ where $\multiset{M_1} + P$ denotes the union of two multisets $\multiset{M_1}$ and $\{P\}$.
    According to the first condition of \cref{def:level0}, there exists a canonical reaction $\beta:\multiset{M_1'}\to \multiset{M_2'} + P$ that produces $P$.
    Now, apply $\beta$ and $\alpha$ sequentially on the reactants $\multiset{M_1} + \multiset{M_1'}$. This gives rise to a new reaction:
    \[\gamma:\multiset{M_1} + \multiset{M_1'}
    \to  \multiset{M_1} + (\multiset{M_2'} + P)
    = (\multiset{M_1}+ P) + \multiset{M_2'}
    \to  \multiset{M_2} + \multiset{M_2'}.\]
    Therefore, the reaction $\alpha$, when catalyzed by $\multiset{M_2'}$ (i.e., adding $\multiset{M_2'}$ to reactants and products) and using the inverse of the canonical reaction $\beta$, can be replaced by the reaction $\gamma$, which involves fewer reactant polymers outside of $\sett{S}$ than $\alpha$ did originally:
    \[
    (\multiset{M_1} + P)+ \multiset{M_2'} 
    = \multiset{M_1}  + (\multiset{M_2'}+ P)
    \to
    \multiset{M_1} + \multiset{M_1'}
    \to
    \multiset{M_2} + \multiset{M_2'}.
    \]
    By repeating this procedure, $\alpha$ decomposes into a canonical reaction.
    This concludes the proof.
\end{proof}

With this groundwork in place, we now state our main theorem, which characterizes the equilibrium distribution of the entire polymer system, both on-target and off-target, in terms of their assigned levels.

\begin{theorem}\label{theorem:level-i}
    Let $\sett S$ be the stable set of on-target polymers with concentration exponents $\mu: \sett{S} \to (0,1]$. For the extended concentration exponent $\bar \mu:\polymers \to \R^+$ generated by \cref{alg1} and 
	for any $0 < c < 1$, there are monomer concentrations $\vec{x^0} \in (0,1)^{\monomers}$ such that the configuration $\vec{x} \in (0,1)^{\polymers}$ with each polymer $P \in \polymers$ at concentration $c^{\bcex{P}}$ is the minimum of $g(\vec{x})$ subject to $\vec{A} \cdot \vec{x} = \vec{x^0}$ (i.e., \cref{eq:simple_energy,eqn:constraints}).
    {Furthermore, $\bcex{P}\ge \mu_1 >1$ for all $P \notin \sett S$.} 
\end{theorem}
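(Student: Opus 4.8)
The plan is to show two things: (i) the configuration $\vec{x}$ defined by $x_P = c^{\bcex{P}}$ is a minimizer of $g$ subject to the mass-conservation constraint for a suitable choice of $\vec{x^0}$, and (ii) the strict inequality $\bcex{P} \ge \mu_1 > 1$ holds for every off-target polymer. For (i), by \cref{lem:any_to_canonical} it suffices to verify that every canonical reaction is balanced at $\vec{x}$, i.e. $\bcex{\multiset{M_1}} = \bcex{\multiset{M_2}}$ for every canonical $\alpha : \multiset{M_1} \to \multiset{M_2}$. The natural choice of monomer concentrations is $\vec{x^0} = \vec{A} \cdot \vec{x}$, which automatically satisfies \cref{eqn:constraints} and lands in $(0,1)^{\monomers}$ since each $x_P = c^{\bcex{P}} \in (0,1)$ (we must double-check that $\vec{x^0}$ has strictly positive entries, which follows from condition 1 of \cref{def:level0} — every monomer appears in some polymer reachable from $\sett S$, hence in some polymer of $\polymers$).

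The heart of the argument is establishing balance of \emph{every} canonical reaction, not just the levelizing ones. Here I would argue by induction on the construction of the levels, tracking the invariant: at the moment level $\sett S_i$ is defined, for every canonical reaction $\alpha$ all of whose products lie in $\bigcup_{j=0}^{i} \sett S_j$, we have $\bcex{\multiset{M_1}} = \bcex{\multiset{M_2}}$; and more generally $k_i(\alpha) \ge \mu_i \cdot l_i(\alpha)$ for all canonical $\alpha$ (with equality exactly on the $i$-levelizing reactions). The key lemma to prove is that the sequence $\mu_1 \le \mu_2 \le \cdots$ is nondecreasing: since at level $i$ we subtract off more products (those in $\sett S_{i-1}$) from $\multiset{\hat M_2}$ but these products are assigned the \emph{smallest} exponent so far, and since removing products from a reaction that was already "over-balanced" toward reactants can only increase the effective ratio, the minimum ratio can only go up. Granting monotonicity, when a canonical reaction $\alpha$ first becomes "fully resolved" — say its last new product is added at level $i$ — the defining equation displayed just before \cref{theorem:level-i} shows $\bcex{\multiset{M_1}} = \bcex{\multiset{\hat M_2}} + \mu_i l_i(\alpha) = \bcex{\multiset{M_2}}$ provided $\alpha$ is $i$-levelizing; for a general $\alpha$ one decomposes it (via the additive/Hilbert-basis structure, \cref{subsec:Hilbert}) into levelizing reactions whose balance then composes. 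I expect reconciling this decomposition with the per-level bookkeeping — i.e. showing that "every canonical reaction inherits balance from the levelizing ones" — to be the main obstacle, and it is precisely the subtlety flagged in \cref{sec:nontrivial}.

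For part (ii), stability (\cref{def:stable}) gives $k(\alpha)/l(\alpha) > 1$ for every canonical reaction with nonzero novelty computed against $\sett S$ alone, i.e. $\mu_1 = \min_\alpha k_1(\alpha)/l_1(\alpha) > 1$. Combined with the monotonicity $\mu_1 \le \mu_i$ for all $i$, every off-target polymer $P$, being added at some level $i \ge 1$, satisfies $\bcex{P} = \mu_i \ge \mu_1 > 1$. Since $0 < c < 1$, the map $t \mapsto c^t$ is decreasing, so $x_P = c^{\bcex{P}} < c^{1} = c$, placing every off-target polymer strictly below the base concentration $c$ (and below every on-target polymer, whose exponent is $\le 1$), as claimed. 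A minor point to address: one must confirm $\mu_1$ is well-defined, i.e. that the minimum over canonical reactions with $l_1(\alpha) \ne 0$ is attained and finite — this is handled by the reduction to a finite Hilbert basis referenced in \cref{def:ratio} and \cref{subsec:Hilbert}, and finiteness of the ratio follows because any canonical reaction producing a new polymer has $l_1(\alpha) \ge 1$ while $k_1(\alpha) = \cex{\multiset{M_1}}$ is a finite positive number.
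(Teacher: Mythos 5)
There is a genuine gap, and you have located it yourself: the step ``every canonical reaction inherits balance from the levelizing ones'' is never established, and the decomposition you propose does not deliver it. A general canonical reaction is a nonnegative integer combination of Hilbert-basis elements, but those basis elements need not be levelizing at any level (a basis reaction can have $k_i(\alpha)/l_i(\alpha)$ strictly above $\mu_i$ at every level at which it still has unlevelized products, so it never attains the minimum); hence you cannot write an arbitrary canonical reaction as a combination of levelizing reactions and ``compose'' their balance. Your invariant $k_i(\alpha)\ge \mu_i\, l_i(\alpha)$ does give one inequality for free: if $t$ is the top level among the products of $\alpha$, then every product outside $\multiset{\hat M_2}$ has exponent exactly $\mu_t$, so $\bcex{\multiset{M_2}}=\bcex{\multiset{\hat M_2}}+\mu_t\, l_t(\alpha)\le \bcex{\multiset{\hat M_2}}+k_t(\alpha)=\bcex{\multiset{M_1}}$. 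What is missing is the reverse inequality, and that is exactly where the paper does something your sketch does not: assuming $\bcex{\multiset{M_1}}>\bcex{\multiset{M_2}}$, it takes for each product $P\in\multiset{M_2}$ the levelizing reaction $\beta_P$ that introduced $P$ (balanced by construction of $\mu_i$), sums $\multiset{M_2}[P]$ copies of each to obtain a canonical reaction producing $\multiset{M_2}$, and then appends the \emph{reverse} of $\alpha$; the resulting canonical reaction has reactant exponent strictly smaller than product exponent, contradicting the first inequality. Without this (or an equivalent) construction the central claim is unproven.

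The rest of your proposal does track the paper: the reduction to canonical reactions via \cref{lem:any_to_canonical}, the choice $\vec{x^0}=\vec{A}\cdot\vec{x}$, the monotonicity $\mu_i\le\mu_{i+1}$ (which the paper proves by the mediant-style computation $(k_i(\alpha)-n\mu_i)/(l_i(\alpha)-n)\ge k_i(\alpha)/l_i(\alpha)$ in \cref{lemma:nondecreasing_ratio}), and the use of stability to get $\mu_1>1$ and hence $\bcex{P}\ge\mu_1>1$ for all off-target $P$ are all as in the paper's argument for the ``Furthermore'' part.
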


In this configuration, every off-target polymer has strictly lower concentration than any on-target polymer. This is because concentrations scale exponentially with the extended exponent $\bcex{P}$, and off-target polymers are assigned strictly larger exponents than the shared minimal value $\mu$ of the on-target set $\sett S$. As a result, for $0 < c < 1$, off-target concentrations are exponentially suppressed. 

\begin{algorithm}
\caption{Calculating level sets and concentration exponents. The repeat loop will terminate because there are finitely many polymers. 
The Hilbert basis implementation avoiding the infinite set $\Lambda$ is discussed in \cref{subsec:Hilbert}.
}
\label{alg1}
\begin{algorithmic}[1]
    \STATE \textbf{Input:} A set of level-0 polymers $\sett S_0=\sett S$ with $\cex{\cdot}$ and all canonical reactions $\Lambda$
    \STATE \textbf{Output:} Sets of level-$i$ polymers $\sett S_i$ and concentration exponents $\bcex{\cdot}$
    \STATE Set $i\gets 0$, $\sett S_j\gets \emptyset$ for all $j\in \N$
    \REPEAT
        \STATE $i\gets i+1$
        \FOR{each canonical reaction $\alpha: \multiset{M_1} \to \multiset{M_2} \in \Lambda$}
            \STATE $\multiset{\hat M_2}\gets$ the multiset of all level $\leq (i-1)$ polymers in $\multiset{M_2}$.
            \STATE $k_i(\alpha) \gets \bcex{\multiset{M_1}}-\bcex{\multiset{\hat M_2}}$ and $l_i(\alpha) \gets |\multiset{M_2}| - |\multiset{\hat M_2}|$
        \ENDFOR
        \STATE Compute $\mu_i = \min_\alpha \{k_i(\alpha)/l_i(\alpha)\}$ where $\min$ is taken over $\alpha\in \Lambda$ with $l_i(\alpha)\neq 0$.
        \FOR{each $\alpha:\multiset{M_1} \to \multiset{M_2} \in \Lambda$ such that $\mu_i=k_i(\alpha)/l_i(\alpha)$}
            \STATE Append $\alpha$ to the set of $i$-levelizing reactions 
            \STATE Append all polymers $P$ in $\multiset{M_2}$ that are not in $\multiset{\hat M_2}$ to $\sett S_i$ and assigns $\bcex{P}=\mu_i$
        \ENDFOR
    \UNTIL {All polymers are included in $\cup_{j=0}^i \sett S_j$}
    \STATE \textbf{return} The sets $\sett S_i$ and the concentration exponents $\bcex{\cdot}$
\end{algorithmic}
\end{algorithm}

\begin{proof}[Proof of \cref{theorem:level-i}]
    We use a proof by contradiction to show that each canonical reaction must be balanced in the configuration with $[P]=c^{\bcex{P}}$, 
    which suffices to conclude the proof thanks to \cref{lem:any_to_canonical}.

    Suppose that there exists a canonical reaction $\alpha:\multiset{M_1} \to \multiset{M_2}$.
    We consider two cases $\bcex{\multiset{M_1}}<\bcex{\multiset{M_2}}$ or $\bcex{\multiset{M_1}}>\bcex{\multiset{M_2}}$. Recall all polymers are levelized.
    
    \noindent{\bf Case 1:} $\bcex{\multiset{M_1}}<\bcex{\multiset{M_2}}.$ 
    Let $t$ be the top level among the polymers in $\multiset{M_2}$.
    Note that $\bcex{P}=\mu_t$ for each $P\in \sett S_t$.
    Since $\mu_t$ is defined as the minimum of $k_t(\alpha') / l_t(\alpha')$ over all canonical reactions $\alpha' \in \Lambda$ with $l_t(\alpha') \ne 0$, we must have: 
    \[
    \mu_t \le \frac{k_t(\alpha)}{l_t(\alpha)} = \frac{\bcex{\multiset{M_1}} - \bcex{\multiset{\hat M_2}}}{l_t(\alpha)} <\frac{\bcex{\multiset{M_2}} - \bcex{\multiset{\hat M_2}}}{l_t(\alpha)} = \mu_t
    \]
    which is a contradiction.

    \noindent{\bf Case 2:} $\bcex{\multiset{M_1}}>\bcex{\multiset{M_2}}.$ 
    Let $\beta_P$ be the levelizing reaction including $P$ as product for each polymer $P \in \multiset{M_2}$.
    Consider a canonical reaction 
    $\sum_{P} \multiset {M_2}[P]\cdot \beta_P :\multiset{M_1'} \to \multiset{M_2'} + \multiset{M_2}$ obtained by summing $\multiset{M_2}[P]$ copies of $\beta_P$
    which includes $\multiset{M_2}$ as products.\footnote{Here we use the linear combination of reactions in a standard way; for example, for $\alpha:\multiset {N_1} \to \multiset {N_2}$ and $\beta: \multiset {N_1'} \to \multiset {N_2'}$, the reaction $2 \cdot \alpha + \beta$ denotes the reaction $2\multiset{ N_1}  + \multiset {N_1'} \to 2 \multiset {N_2} + \multiset {N_2'}$. }
    By applying $\multiset{M_2} \to \multiset{M_1}$ to this reaction, we have a canonical reaction $\beta: \multiset{M_1'} \to \multiset{M_2'} + \multiset{M_2} \to \multiset{M_2'} +\multiset{M_1}$ such that
    \[
    \bcex{\multiset{M_1'}} =
    \bcex{\multiset{M_2'}}+ \bcex{\multiset{M_2}}  
    < \bcex{\multiset{M_2'}} + \bcex{\multiset{M_1}}.
    \]
    Therefore this case is reduced to {\bf Case 1}, leading once again to a contradiction.

    The ``Furthermore'' part is proven in \cref{cor:minbcex}.
\end{proof}

\subsection{Non-Interacting Off-target Polymers}\label{subsec:non-interacting}
Our main theorem implies that if the on-target polymers $\sett S$ and the corresponding concentration exponents $\mu$ are chosen appropriately, the equilibrium \emph{exists} with the extended concentration exponents $\bar \mu$ consistent with $\mu$. 
In this section, 
we show that $\bar \mu$ exponents for \emph{non-interacting} off-target polymers $P$ can be assigned in a particularly easy way. 
Non-interacting off-target polymers are those that can be independently made by a canonical reaction:
\begin{corollary}
    Let $\sett S$ be a stable set of on-target polymers.
    Suppose that an off-target polymer $P\notin\sett S$ is a product of a canonical reaction $\alpha_P:\multiset{M}_{P} \to \multiset{M_{\it P}'} + P$ for some multisets $\multiset{M}_{P} $ and $ \multiset{M_{\it P}'}$ from $\sett S$.\footnote{Note that there may be other reactions involving $P$ with other off-target polymers.} Then $\bcex{P} = \cex{\multiset{M}_{P}} - \cex{\multiset{M_{\it P}'}}.$
\end{corollary}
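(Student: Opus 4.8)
The plan is to show that the levelizing algorithm necessarily assigns $\bcex{P}$ the value $\cex{\multiset{M}_P} - \cex{\multiset{M_P'}}$, and the cleanest route is to pin down both inequalities $\bcex{P} \le \cex{\multiset{M}_P} - \cex{\multiset{M_P'}}$ and $\bcex{P} \ge \cex{\multiset{M}_P} - \cex{\multiset{M_P'}}$ using \cref{theorem:level-i} together with the structure of the reaction $\alpha_P$. For the upper bound, I would consider the level $i$ at which $P$ first enters, i.e. $P \in \sett{S}_i$ so $\bcex{P} = \mu_i$. Since $\alpha_P$ is a canonical reaction all of whose products except $P$ lie in $\sett{S} = \sett{S}_0$, at level $i$ we have $\multiset{\hat M_2} = \multiset{M_P'}$ (everything except $P$ has level $0 \le i-1$, assuming $i \ge 1$, which holds since $P \notin \sett S$), hence $l_i(\alpha_P) = |\multiset{M}_2| - |\multiset{\hat M}_2| = 1$ and $k_i(\alpha_P) = \cex{\multiset{M}_P} - \cex{\multiset{M_P'}}$. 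By the definition of $\mu_i$ as a minimum over canonical reactions with nonzero novelty, $\mu_i \le k_i(\alpha_P)/l_i(\alpha_P) = \cex{\multiset{M}_P} - \cex{\multiset{M_P'}}$.

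For the matching lower bound, the key observation is that $\alpha_P$ must be \emph{balanced} in the equilibrium configuration guaranteed by \cref{theorem:level-i}. In that configuration each polymer $Q$ has concentration $c^{\bcex{Q}}$, and balance of $\alpha_P$ gives $\cex{\multiset{M}_P} = \cex{\multiset{M_P'}} + \bcex{P}$, i.e. exactly $\bcex{P} = \cex{\multiset{M}_P} - \cex{\multiset{M_P'}}$. This actually yields the equality directly, so strictly speaking I only need the ``balanced'' consequence of the main theorem applied to this particular canonical reaction; the explicit level-$i$ computation above is an alternative, more self-contained derivation of the $\le$ direction. I would present the argument via balance as the main line, since it is shortest: invoke \cref{theorem:level-i} to get that all canonical reactions are balanced at the configuration with concentrations $c^{\bcex{\cdot}}$, specialize to $\alpha_P$, take logarithms base $c$, and read off the claim. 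One should note $\cex{\multiset M_P} \ge \cex{\multiset{M_P'}}$ is automatic here since otherwise $\bcex P \le 0$, contradicting the ``Furthermore'' part of \cref{theorem:level-i}; but this is not needed for the statement as written.

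The main obstacle, such as it is, is purely bookkeeping: making sure the notation $\multiset{\hat M_2}$ is instantiated correctly for the reaction $\alpha_P$ (namely that $P$ is the unique product not already levelized at the moment $P$ is being added, which requires checking $\multiset{M_P'} \subseteq \sett{S}$ and that no other off-target polymer appears in $\alpha_P$ on the product side — both given by hypothesis), and confirming that $\alpha_P$ is indeed canonical in the sense of \cref{def:canonical} (its reactants $\multiset{M}_P$ lie in $\N^{\sett S}$, which is given). Once those are in place, the equality follows in two lines from detailed balance; no inductive argument over levels is actually required beyond what \cref{theorem:level-i} already packages.
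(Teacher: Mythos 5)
Your main line of argument—invoke \cref{theorem:level-i} to conclude that the canonical reaction $\alpha_P$ is balanced in the configuration with concentrations $c^{\bcex{\cdot}}$, then take logarithms base $c$—is exactly the paper's proof, and it is correct. The additional level-$i$ bookkeeping you sketch for the upper bound is sound but, as you yourself note, redundant once detailed balance is in hand.
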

\begin{proof}
    By \cref{theorem:level-i}, the reaction $\alpha_P$ is in detailed balance, $c^{\bcex{\multiset{M}_{P}}} = c^{\bcex{\multiset{M_{\it P}'} + P}}$, and we take the logarithm of both sides.
\end{proof}

The same idea extends further.
At any point of the execution of the algorithm, suppose that the extended concentration exponents of the set of polymers $\bar {\sett S}$ have been already assigned.
For a non-interacting polymer $P$ outside of $\bar {\sett S}$ with the reaction $\alpha_P:\multiset{M}_{P} \to \multiset{M_{\it P}'} + P$ for multisets $\multiset{M}_{P} $ and $ \multiset{M_{\it P}'}$ from $\bar {\sett S}$, we can assign $\bcex{P} = \bcex{\multiset{M}_{P}} - \bcex{\multiset{M_{\it P}'}}.$ 
This assignment is valid by the same reason to the corollary.

\begin{example}
    Consider the set of monomers and polymers given by $\monomers=\{a,b,c\}$ and $\polymers=\{A,B,C,X,Y,Z\}$; where $A=\{a,a\}$, $B=\{a,b\}$, $C=\{c\}$, $X=\{a,a,a,b\}$, $Y=\{b,b,c,c\}$, and $Z=\{b,b,c,c,c\}$.
    Consider the uniform on-target set $\sett S=\{A,B,C\}$.
    Instead of inspecting all canonical reactions,
    we can focus on three canonical reactions
    \[
        \alpha:A+B \to X,
        ~~\beta:3B+2C\to X+Y,~~~\gamma: A+2B+3C\to X+Z.
    \]
    Here $X$ is the non-interacting off-target polymer in $\alpha$, thus $\bcex{X}=2$. After assigning $\bcex{X}$, $Y$ and $Z$ become non-interacting off-target polymers in $\beta$ and $\gamma$, respectively, so that we derive $\bcex{Y}=3$ and $\bcex{Z}=4$.
\end{example}

\section{Framework for Upper-Bounding Off-target Polymers}
\label{sec:bounding-framework}

Often we are only interested in an upper bound on the concentration of an off-target undesired polymer.
In this case, rather than generating its exact equilibrium concentration via \cref{alg1}, 
we can more efficiently compute an upper bound by narrowing our focus to reactions that directly produce $P$ instead of examining the full set of canonical reactions. 
This leads to a simpler surrogate quantity that approximates $\bcex P$ from below.

\begin{definition}\label{def:heuristic}
    For $P\notin \cup_{j=0}^{i-1} \sett S_j$, let $\widetilde\mu_i(P) = \min_{\alpha}\{k_i(\alpha)/l_i(\alpha)\}$ where the minimum is taken over all canonical reactions $\alpha$ that include $P$ as a product.
\end{definition}

Since $P$ has not yet been levelized by construction, any such reaction $\alpha$ producing $P$, obviously involves at least one unlevelized polymer, ensuring $l_i(\alpha) \neq 0$.

Intuitively, $\widetilde\mu_i(P)$ captures a conservative estimate of the exponent with which polymer $P$ can grow in concentration at level $i$,
based only on reactions that actually produce $P$. Since it is defined as a minimum over a subset of possible reactions, it is easier to compute than the full concentration exponent $\bcex{P}$, yet is still meaningful for analysis:

\begin{theorem}\label{theorem:heuristic}
    For any polymer $P\notin \cup_{j=0}^{i-1} \sett S_j$,
    $\bcex{P} \ge \widetilde\mu_i(P) \ge \mu_i$.
\end{theorem}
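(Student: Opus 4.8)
\textbf{Proof proposal for \cref{theorem:heuristic}.}

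The plan is to establish the two inequalities $\bcex{P} \ge \widetilde\mu_i(P)$ and $\widetilde\mu_i(P) \ge \mu_i$ separately, with the second being essentially immediate and the first requiring a little more care.

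\emph{The easy inequality, $\widetilde\mu_i(P) \ge \mu_i$.} Observe that $\widetilde\mu_i(P)$ is a minimum of $k_i(\alpha)/l_i(\alpha)$ taken over the set of canonical reactions producing $P$, whereas $\mu_i$ is the minimum of the same quantity taken over \emph{all} canonical reactions $\alpha$ with $l_i(\alpha) \ne 0$. As noted right after \cref{def:heuristic}, every reaction producing the not-yet-levelized polymer $P$ automatically has $l_i(\alpha) \ne 0$, so the index set defining $\widetilde\mu_i(P)$ is a subset of the one defining $\mu_i$. A minimum over a smaller set is at least as large, giving $\widetilde\mu_i(P) \ge \mu_i$.

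\emph{The main inequality, $\bcex{P} \ge \widetilde\mu_i(P)$.} First I would reduce to the statement that it suffices to prove $\bcex{P} \ge \widetilde\mu_j(P)$ at the level $j$ at which $P$ gets levelized, i.e., $P \in \sett S_j$ for some $j \ge i$; I then want to chain this back to level $i$. For the first half: when $P$ is added to $\sett S_j$, it is because $P$ appears as a product of some $j$-levelizing canonical reaction $\alpha$, and for that reaction $\bcex{P} = \mu_j = k_j(\alpha)/l_j(\alpha)$. Since $\alpha$ produces $P$, this ratio is one of the terms in the minimum defining $\widetilde\mu_j(P)$, so $\bcex{P} = \mu_j \ge \widetilde\mu_j(P)$ — wait, this goes the wrong way, so instead I should argue directly: $\widetilde\mu_j(P)$ is a minimum over reactions producing $P$, and I must show $\bcex{P}$, which equals $\mu_j$, is at least this minimum. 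Actually the cleanest route is: for \emph{any} canonical reaction $\alpha$ producing $P$ at level $i$, the balance guaranteed by \cref{theorem:level-i} gives $\bcex{\multiset M_1} = \bcex{\multiset M_2}$; decomposing $\multiset M_2$ into the part $\multiset{\hat M_2}$ levelized before level $i$ and the rest (which all have $\bcex{\cdot} \ge \mu_i$, hence in particular includes $P$ with $\bcex{P} \ge \mu_i$), one gets $k_i(\alpha) = \bcex{\multiset M_1} - \bcex{\multiset{\hat M_2}} = \sum_{Q \in \multiset M_2 - \multiset{\hat M_2}} (\multiset M_2 - \multiset{\hat M_2})[Q]\cdot \bcex{Q} \ge \bcex{P}\cdot(\text{count of }P) + \mu_i\cdot(\text{rest})$. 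To turn this into $\bcex{P} \ge k_i(\alpha)/l_i(\alpha)$ I need to handle the other new polymers $Q \ne P$ in $\multiset M_2$: since each such $Q$ also satisfies $\bcex Q \ge \mu_i$ but I want an upper bound $\bcex Q \le \bcex P$ when $Q$ is levelized no later than $P$...

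The cleaner and correct argument: I would instead induct on the level $j$ at which $P$ is levelized and use that $\widetilde\mu_i$ is monotone in a suitable sense, or — better — argue that among all new polymers in a levelizing reaction, $P$ carries the exponent $\mu_j$ and all co-products are levelized at the \emph{same or later} stage so have exponent $\ge \mu_j$; then $\bcex{\multiset M_1} = \bcex{\multiset{\hat M_2}} + \sum \bcex{Q}$ over new polymers $Q$, and replacing each $\bcex{Q}$ by the smaller $\mu_j = \bcex P$ can only decrease the right side, yielding $\bcex{\multiset M_1} \ge \bcex{\multiset{\hat M_2}} + \mu_j\cdot l_j(\alpha)$, i.e. $k_j(\alpha)/l_j(\alpha) \le \mu_j = \bcex P$; taking the minimum over such reactions, $\widetilde\mu_j(P) \le \bcex P$. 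Finally, to pass from level $j$ to the claimed level $i \le j$, I would note that a reaction that is $j$-levelizing and produces $P$, viewed at level $i$, has $\multiset{\hat M_2}$ only shrinking (fewer polymers are levelized by level $i-1$ than by level $j-1$), so $k_i(\alpha) \ge k_j(\alpha)$ and $l_i(\alpha) \ge l_j(\alpha)$; a short calculation comparing these ratios, using $k_j(\alpha)/l_j(\alpha) = \mu_j \ge \mu_i$ and that the polymers in $\multiset{\hat M_j} \setminus \multiset{\hat M_i}$ all have exponents $\ge \mu_i$, should give $k_i(\alpha)/l_i(\alpha) \le \bcex P$ as well, so this reaction witnesses $\widetilde\mu_i(P) \le \bcex P$.

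\emph{Main obstacle.} The delicate point is the last step: controlling how the ratio $k_i(\alpha)/l_i(\alpha)$ changes as the reference level drops from $j$ to $i$, since both numerator and denominator increase and one must verify the ratio stays $\le \bcex P$. The key fact making this work is that every polymer that is ``un-levelized'' when we move from level $j$ back to level $i$ has exponent between $\mu_i$ and $\mu_j = \bcex P$, so moving its contribution from the $\multiset{\hat M_2}$ side to the ``new'' side changes $k$ by its exponent and $l$ by one, and the resulting ratio is a weighted mediant of quantities all $\le \bcex P$; I would make this precise with the standard mediant inequality.
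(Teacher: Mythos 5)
Your proposal is correct and, once the false starts are stripped away, is essentially the paper's own argument: the $j$-levelizing reaction producing $P$ witnesses $\widetilde\mu_j(P)\le \mu_j=\bcex{P}$, and the mediant-style monotonicity of $k_i(\alpha)/l_i(\alpha)$ in the level index (the paper's \cref{lemma:nondecreasing_ratio}) transfers this back to level $i$, while the subset-of-reactions observation gives $\widetilde\mu_i(P)\ge\mu_i$. (Minor note: your first attempt did not actually ``go the wrong way''---$\widetilde\mu_j(P)$ is a minimum over a set containing the value $\mu_j=\bcex{P}$, so $\widetilde\mu_j(P)\le\bcex{P}$ is immediate, which is exactly what you need.)
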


\begin{lemma}\label{lemma:nondecreasing_ratio}
    For a canonical reaction $\alpha$ with $l_{i+1}(\alpha)\neq 0$, it holds that
    $k_i(\alpha)/l_i(\alpha) \le k_{i+1}(\alpha)/l_{i+1}(\alpha)$.
\end{lemma}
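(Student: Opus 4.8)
The plan is to track how the two quantities $k_i(\alpha)$ and $l_i(\alpha)$ change when the level index increases by one, and then reduce the claimed inequality to the defining minimality of $\mu_i$.

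First I would fix a canonical reaction $\alpha:\multiset{M_1}\to\multiset{M_2}$ with $l_{i+1}(\alpha)\neq 0$ and compare the ``already-levelized part'' of $\multiset{M_2}$ at levels $i$ and $i+1$. Writing $\multiset{\hat M_2}^{(i)} := \multiset{M_2}\cap\bigl(\bigcup_{j=0}^{i-1}\sett{S}_j\bigr)$ and $\multiset{\hat M_2}^{(i+1)} := \multiset{M_2}\cap\bigl(\bigcup_{j=0}^{i}\sett{S}_j\bigr)$, these differ exactly by the multiset $\multiset{D} := \multiset{M_2}\cap\sett{S}_i$, every polymer of which was assigned concentration exponent $\mu_i$ by \cref{def:ratio}. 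Letting $d := |\multiset{D}|$, this gives $\bcex{\multiset{\hat M_2}^{(i+1)}} = \bcex{\multiset{\hat M_2}^{(i)}} + \mu_i d$ and $|\multiset{\hat M_2}^{(i+1)}| = |\multiset{\hat M_2}^{(i)}| + d$; since $\bcex{\multiset{M_1}}$ does not change with the level (it equals $\cex{\multiset{M_1}}$ because $\alpha$ is canonical), subtracting yields $k_{i+1}(\alpha) = k_i(\alpha) - \mu_i d$ and $l_{i+1}(\alpha) = l_i(\alpha) - d$.

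Next I would record the positivity I need. Because $\multiset{\hat M_2}^{(i)}\subseteq\multiset{\hat M_2}^{(i+1)}\subseteq\multiset{M_2}$, we have $l_i(\alpha)\ge l_{i+1}(\alpha)\ge 1$, so in particular $l_i(\alpha)\neq 0$; hence $\alpha$ is among the reactions over which $\mu_i=\min_{\alpha'}\{k_i(\alpha')/l_i(\alpha')\}$ is taken, and therefore $k_i(\alpha)\ge\mu_i\, l_i(\alpha)$. Now I would cross-multiply the target inequality $k_i(\alpha)/l_i(\alpha)\le k_{i+1}(\alpha)/l_{i+1}(\alpha)$ — legitimate since both denominators are strictly positive — substitute $k_{i+1}(\alpha)=k_i(\alpha)-\mu_i d$ and $l_{i+1}(\alpha)=l_i(\alpha)-d$, and simplify. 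All terms cancel except a residual inequality equivalent to $d\,k_i(\alpha)\ge d\,\mu_i\, l_i(\alpha)$, which is trivial when $d=0$ (the two ratios are then equal) and, for $d>0$, is exactly the already-established $k_i(\alpha)\ge\mu_i\, l_i(\alpha)$.

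I do not expect a serious obstacle: the content is essentially the ``mediant'' observation that $k_i(\alpha)/l_i(\alpha)$ is a weighted average of $k_{i+1}(\alpha)/l_{i+1}(\alpha)$ and the value $\mu_i$, together with the fact that $\mu_i$ is a lower bound for $k_i(\alpha)/l_i(\alpha)$. The only points requiring care are (i) checking that $\alpha$ really participates in the minimum defining $\mu_i$, i.e.\ that $l_i(\alpha)\neq 0$, which follows from $l_{i+1}(\alpha)\neq 0$ and the monotonicity of $\multiset{\hat M_2}$ in the level; and (ii) keeping the direction of the inequality correct through the cross-multiplication, which is safe precisely because the denominators $l_i(\alpha)$ and $l_{i+1}(\alpha)$ are strictly positive, irrespective of the signs of $k_i(\alpha)$ and $k_{i+1}(\alpha)$.
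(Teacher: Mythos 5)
Your proposal is correct and follows essentially the same route as the paper: both identify the count $d$ (the paper calls it $n$) of level-$i$ polymers among the products, derive the update rules $k_{i+1}(\alpha)=k_i(\alpha)-\mu_i d$ and $l_{i+1}(\alpha)=l_i(\alpha)-d$, and close the argument using the minimality bound $k_i(\alpha)\ge \mu_i\, l_i(\alpha)$. The only difference is presentational --- you cross-multiply where the paper writes a direct chain of (in)equalities --- and your explicit check that $l_i(\alpha)\neq 0$ is a welcome bit of extra care.
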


\begin{proof}
    Note that $\mu_i \le k_i(\alpha)/l_i(\alpha)$ by definition of $\mu_i$. Let $n$ be the count of level-$i$ product polymers in $\alpha$, which must be smaller than $l_{i+1}(\alpha)$. Then we have
    \[
    \frac{k_{i+1}(\alpha)}{l_{i+1}(\alpha)} 
    = \frac{k_{i}(\alpha) - n \mu_i}{ l_i(\alpha) - n} 
    = \frac{k_{i}(\alpha) - n \mu_i}{ l_i(\alpha) - n} \ge \frac{k_{i}(\alpha) - n (k_i(\alpha)/l_i(\alpha))}{ l_i(\alpha) - n} = \frac{k_i(\alpha)}{l_i(\alpha)}.
    \]
\end{proof}

\begin{proof}[Proof of \cref{theorem:heuristic}]
    Since all polymers will eventually be levelized, there exists a $j$-levelizing canonical reaction $\alpha$ that includes $P$ as a product for some $j \ge i$. Then,
    $\widetilde \mu_i(P) \le k_i(\alpha)/l_i(\alpha) \le k_j(\alpha)/l_j(\alpha) = \bcex{P}$
    where the first inequality follows from \cref{def:heuristic}, and the second follows from \cref{lemma:nondecreasing_ratio}.

    Furthermore, the value $\widetilde\mu_i(P)$ is computed by taking the minimum of $k_i(\alpha)/l_i(\alpha)$ only over canonical reactions that produce $P$ (and have $l_i(\alpha) \ne 0$). In contrast, $\mu_i$ is defined as the minimum over all canonical reactions, regardless of whether they produce $P$ or not.
    In other words, the set of reactions considered when computing $\widetilde\mu_i(P)$ is a subset of those considered for $\mu_i$, making the search space for $\widetilde\mu_i(P)$ more restricted. Since a minimum over a smaller set cannot be smaller than the minimum over a larger set, we conclude that $\widetilde\mu_i(P) \ge \mu_i$.
\end{proof}

To summarize, $\widetilde\mu_i(P)$ serves as a computationally efficient lower bound on $\bcex{P}$.
When used alongside \cref{theorem:level-i}, this upper bounds the equilibrium concentration of $P$.

    The usefulness of \cref{theorem:heuristic} extends beyond individual estimates. It contributes to structural insights about level assignments during the iterations of our algorithm. Specifically, after levelizing up through $S_{i-1}$, we know that any unlevelized polymer $P \notin \cup_{j=0}^{i-1} \sett S_j$ satisfies $\bcex{P} \ge \mu_i > \mu_{i-1}$. This inequality ensures that polymers awaiting level assignment must exhibit smaller concentrations.
    In particular, because the first level imbalance $k_1$ and novelty $l_1$ are precisely identical to $k$ and $l$ defined in \cref{def:canonical},
    we derive 
    $\bcex{P} \ge \mu_i \ge \min_{\alpha}\{k(\alpha)/l(\alpha)\}>1$ and the following corollary.
    \begin{corollary}\label{cor:minbcex}
        $\bcex{P}>1$ for all $P \notin \sett S$.
    \end{corollary}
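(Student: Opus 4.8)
The goal is to show $\bcex{P} > 1$ for every off-target polymer $P \notin \sett{S}$. The plan is to reduce this to the stability hypothesis on $\sett{S}$ (\cref{def:stable}) via the chain of inequalities already assembled in the excerpt. First I would invoke \cref{theorem:heuristic} (equivalently the discussion immediately preceding the corollary): for any $P \notin \cup_{j=0}^{i-1}\sett{S}_j$ we have $\bcex{P} \ge \widetilde\mu_i(P) \ge \mu_i$. Taking $i = 1$, so that $\cup_{j=0}^{0}\sett{S}_j = \sett{S}_0 = \sett{S}$, this gives $\bcex{P} \ge \mu_1$ for every $P \notin \sett{S}$. Hence it suffices to prove $\mu_1 > 1$.

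Next I would identify the level-$1$ quantities with the bare imbalance and novelty of \cref{def:canonical}. At level $1$, the set of previously levelized polymers is exactly $\sett{S}$, so $\multiset{\hat M_2} = \multiset{M_2} \cap \sett{S}$ coincides with the $\hat{\multiset{M_2}}$ of \cref{def:canonical}; consequently $k_1(\alpha) = k(\alpha)$ and $l_1(\alpha) = l(\alpha)$ for every canonical reaction $\alpha$ (using that $\bcex{\cdot}$ restricted to $\sett{S}$ equals $\cex{\cdot}$, and that $|\multiset{M_2}| - |\multiset{\hat M_2}| = |\multiset{M_2} - \multiset{\hat M_2}|$ by the definition of multiset difference). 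Therefore
\[
\mu_1 = \min_{\alpha\,:\,l_1(\alpha)\neq 0}\frac{k_1(\alpha)}{l_1(\alpha)} = \min_{\alpha\,:\,l(\alpha)\neq 0}\frac{k(\alpha)}{l(\alpha)}.
\]
Now the stability of $\sett{S}$ (\cref{def:stable}) says precisely that $k(\alpha)/l(\alpha) > 1$ for every canonical reaction $\alpha$ with $l(\alpha)\neq 0$. A minimum of finitely many such ratios—finitely many because, as noted in \cref{subsec:Hilbert}, the minimum is attained on a finite Hilbert basis, so the infimum is a genuine minimum and hence is itself one of the ratios $k(\alpha)/l(\alpha) > 1$—is therefore strictly greater than $1$. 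This yields $\mu_1 > 1$, and combining with $\bcex{P} \ge \mu_1$ gives $\bcex{P} > 1$ as claimed.

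The only subtle point, and the one I would be careful about, is the passage from ``each ratio exceeds $1$'' to ``the minimum exceeds $1$'': this is immediate once we know the minimum is attained (rather than merely an infimum), which is why the Hilbert-basis finiteness from \cref{subsec:Hilbert} is the load-bearing ingredient here. Everything else is bookkeeping: checking that the level-$1$ definitions unwind to the plain definitions of $k$ and $l$, and that $\sett{S}_0 = \sett{S}$ so that \cref{theorem:heuristic} applies with $i=1$. No new estimates are needed.
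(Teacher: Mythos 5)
Your proposal is correct and follows essentially the same route as the paper: the paper likewise derives $\bcex{P} \ge \mu_i \ge \mu_1 = \min_\alpha\{k(\alpha)/l(\alpha)\} > 1$ by combining \cref{theorem:heuristic} (with the observation that $k_1, l_1$ coincide with $k, l$ of \cref{def:canonical}) and the stability hypothesis of \cref{def:stable}. Your explicit flag that the infimum is attained via the Hilbert basis of \cref{subsec:Hilbert} is a detail the paper leaves implicit at this point, but it is the same argument.
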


\section{Concentrations of Polymers in the TBN Model} \label{sec:TBN-applications}

While the TBN model is combinatorial in nature, 
quantifying over discrete (saturated) configurations, 
at the end we are most often interested in determining real-valued concentrations, which are accessible to (bulk) wet-lab experiment.
The framework developed in this paper helps to bridge this gap between combinatorics of discrete configurations and concentrations.\footnote{Of course, much of the heavy lifting in bridging this gap is done by the derivation~\cite{DBLP:journals/siam/driksBSWP07} of the cost function $g(\vec{x})$, but our work expands on it beyond numerical simulation.}

Consider a saturated (i.e., maximally bonded) configuration $\multiset{M}$ in the TBN model. 
Quantified over all saturated reconfigurations $\multiset{M'} \cong \multiset{M}$, 
the key quantity of interest in the TBN model is the ``entropy'' of $\multiset{M'}$, defined as the number of polymers in $\multiset{M'}$ (i.e., $|\multiset{M'}|$).
The TBN model defines the ``stable'' configurations to be those that have maximum entropy among all saturated configurations.

Corresponding to the TBN literature~\cite{DBLP:conf/dna/DotyRSTW17}, a multiset of polymers $\multiset{M}$ is called \emph{TBN-stable} if any reaction $\multiset{M} \to \multiset{M'}$ has $|\multiset{M}| \geq |\multiset{M'}|$.
We say that a set $\sett{S}$ of polymers is \emph{TBN-stability closed} if every multiset $\multiset{M} \in \N^\sett{S}$ is TBN-stable and further any reaction $\multiset{M} \to \multiset{M'}$ where $\multiset{M'}$ contains a polymer outside of $\sett{S}$ (i.e., $P \in \multiset{M'}$ for some $P \notin \sett S$) satisfies   $|\multiset{M}| > |\multiset{M'}|$.
In other words, producing a polymer outside of a TBN-stability closed set $\sett{S}$ costs entropy.
The following lemma connects our notion of stability of on-target polymers (\cref{def:stable}) to TBN-stability.

\begin{lemma}\label{lem:TBN-stable}
    If $\sett S$ is TBN-stability closed and satisfies condition (1) of \Cref{def:level0},
    then $\sett{S}$ is on-target with $\mu(P)=1$ for all $P\in \sett S$ (i.e., uniform) and stable.
\end{lemma}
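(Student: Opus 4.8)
The goal is to verify the two conditions of \cref{def:level0} (i.e., that $\sett{S}$ with the uniform assignment $\mu \equiv 1$ is on-target) and then the stability condition of \cref{def:stable}. Condition (1) of \cref{def:level0} is assumed outright, so only condition (2) and stability need work. The plan is to leverage the fact that under the uniform assignment, $\cex{\multiset{M}} = |\multiset{M}|$ for every multiset $\multiset{M} \in \N^\sett{S}$, which turns statements about concentration exponents into statements about cardinalities, exactly the quantities TBN-stability controls.

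\emph{Condition (2) of \cref{def:level0}.} Take $\multiset{M_1}, \multiset{M_2} \in \N^\sett{S}$ with $\multiset{M_1} \cong \multiset{M_2}$. Then both $\multiset{M_1} \to \multiset{M_2}$ and $\multiset{M_2} \to \multiset{M_1}$ are reactions, and since $\sett{S}$ is TBN-stability closed, every element of $\N^\sett{S}$ is TBN-stable; applying TBN-stability to $\multiset{M_1}$ gives $|\multiset{M_1}| \ge |\multiset{M_2}|$, and applying it to $\multiset{M_2}$ gives $|\multiset{M_2}| \ge |\multiset{M_1}|$. Hence $|\multiset{M_1}| = |\multiset{M_2}|$, and with uniform $\mu$ this is precisely $\cex{\multiset{M_1}} = \cex{\multiset{M_2}}$. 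So $\sett{S}$ is on-target with uniform concentration exponents.

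\emph{Stability.} Let $\alpha : \multiset{M_1} \to \multiset{M_2}$ be a canonical reaction with $l(\alpha) \ne 0$, i.e., $\multiset{M_1} \in \N^\sett{S}$ and $\multiset{M_2}$ contains at least one polymer outside $\sett{S}$. Write $\hat{\multiset{M_2}} = \multiset{M_2} \cap \sett{S}$, so $l(\alpha) = |\multiset{M_2}| - |\hat{\multiset{M_2}}| = |\multiset{M_2} - \hat{\multiset{M_2}}| \ge 1$, and since $\mu \equiv 1$, $k(\alpha) = \cex{\multiset{M_1}} - \cex{\hat{\multiset{M_2}}} = |\multiset{M_1}| - |\hat{\multiset{M_2}}|$. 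Because $\multiset{M_2}$ contains a polymer not in $\sett{S}$, the TBN-stability-closed hypothesis gives the \emph{strict} inequality $|\multiset{M_1}| > |\multiset{M_2}|$. Combining, $k(\alpha) = |\multiset{M_1}| - |\hat{\multiset{M_2}}| > |\multiset{M_2}| - |\hat{\multiset{M_2}}| = l(\alpha)$, so $k(\alpha)/l(\alpha) > 1$. Since $\alpha$ was arbitrary, $\sett{S}$ is stable.

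\emph{Main obstacle.} There is no serious obstacle here: the proof is essentially a translation exercise once one notices that uniform $\mu$ identifies $\cex{\cdot}$ with cardinality. The only point requiring a bit of care is making sure the \emph{strict} inequality in the definition of ``TBN-stability closed'' is the one invoked in the stability argument (the non-strict version would only give $k(\alpha)/l(\alpha) \ge 1$, not $> 1$), and that $\multiset{M_1} \cong \multiset{M_2}$ in a canonical reaction indeed makes $\multiset{M_1} \to \multiset{M_2}$ a legitimate reaction so the hypothesis applies. I would also double-check that $l(\alpha) \ne 0$ is exactly the statement that $\multiset{M_2}$ has a polymer outside $\sett{S}$, so that the strict-inequality clause of the hypothesis is triggered precisely on the reactions stability quantifies over.
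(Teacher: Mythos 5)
Your proof is correct and follows essentially the same route as the paper's: apply TBN-stability of multisets over $\sett{S}$ in both directions to get condition (2) with uniform $\mu$, then use the strict inequality from the TBN-stability-closed hypothesis together with the identities $k(\alpha)=|\multiset{M_1}|-|\hat{\multiset{M_2}}|$ and $l(\alpha)=|\multiset{M_2}|-|\hat{\multiset{M_2}}|$ to conclude $k(\alpha)/l(\alpha)>1$. Your added care about which clause of the hypothesis (strict vs.\ non-strict) is invoked for which reactions is a valid clarification of a point the paper leaves implicit.
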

\begin{proof}
    We first check $\sett S$ is on-target with $\mu$.
    For a reaction $\alpha: \multiset{M_1} \to \multiset{M_2}$ where  $\multiset{M_1}$ and $\multiset{M_2}$ are multisets over $\sett S$, $\cex{\multiset{M_1}} = |\multiset{M_1}| \ge |\multiset{M_2}|=\cex{\multiset{M_2}}$. 
    Applying the same argument to $\alpha': \multiset{M_2} \to \multiset{M_1}$
    yields $\cex{\multiset{M_1}} \le\cex{\multiset{M_2}}$. Combining the two gives $\cex{\multiset{M_1}} =\cex{\multiset{M_2}}$, proving $\sett S$ with $\mu$ is an on-target set.

    Now we will prove that uniform on-target $\sett S$ is stable.
    Consider a canonical reaction $\beta: \multiset{M_1'} \to \multiset{M_2'}$ with $l(\beta)>0$. By definition, we have $k(\beta) = \cex{\multiset{M_1'}} - \cex{\multiset{\hat M_2'}} = |\multiset{M_1'}| - |\multiset{\hat M_2'}|$
    and 
    $l(\beta) = |\multiset{M_2'}| - |\multiset{\hat M_2'}|$. 
    Therefore, the condition for $\sett{S}$ being stable, namely $k(\beta)/l(\beta) > 1$, is implied by $|\multiset{M_1'}| > |\multiset{M_2'}|$.
\end{proof}

Thus, the polymers in the TBN-stability closed set $\sett{S}$ represent the intended “high-concentration” polymers, while everything outside of $\sett{S}$ is considered undesired (off-target).

Let canonical reaction $\alpha$ be $\multiset{M} \to \multiset{M'}$; we define $|\multiset{M}| - |\multiset{M'}|$ to be the \emph{entropy loss} $e(\alpha)$ of the reaction $\alpha$.
Recall that during the first iteration of our algorithm,
novelty $l(\alpha)$ is the number of off-target polymers generated in canonical reaction $\alpha$.
The imbalance $k(\alpha)$ in the first iteration can be understood in terms of the entropy loss of the reaction:
the decrease in the number of polymers of a reaction is exactly $e(\alpha) = k(\alpha) - l(\alpha)$.
Thus to have an upper bound on the concentration of off-target polymers via \cref{theorem:heuristic}, 
it is sufficient to find the smallest ratio $e(\alpha)/l(\alpha)$ of any reaction:
\begin{corollary} \label{cor:user-friendly}
Let set $\sett{S} \subseteq \polymers$ be a TBN-stability closed set of polymers.
Let $\mu_1 = \min_\alpha \{e(\alpha)/l(\alpha)\} + 1$ minimized over all reactions $\alpha: \multiset{M} \to \multiset{M'}$ where $\multiset{M} \in \mathbb{N}^\sett{S}$, $\multiset{M'} \in \mathbb{N}^\polymers$ and $l(\alpha) > 0$.
For any $0 < c < 1$, there are monomer concentrations $\vec{x^0} \in (0,1)^{\monomers}$ and configuration $\vec{x} \in (0,1)^{\polymers}$ that minimizes $g(\vec{x})$ subject to $\vec{A} \cdot \vec{x} = \vec{x^0}$ where $\vec{x}$ satisfies:
each polymer $P \in \sett S$ has concentration exactly $c$,
and each polymer $P \in \polymers \setminus \sett S$ has concentration not more than $c^{\mu_1}$.
\end{corollary}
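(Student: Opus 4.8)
The plan is to derive this corollary directly from \cref{lem:TBN-stable}, \cref{theorem:level-i}, and \cref{theorem:heuristic} (specialized to $i=1$ via \cref{cor:minbcex}), by unwinding definitions and checking that the quantity $\mu_1$ defined here coincides with the $\mu_1$ produced by \cref{alg1}. First I would invoke \cref{lem:TBN-stable}: since $\sett{S}$ is TBN-stability closed and (by hypothesis of the corollary, inherited implicitly or to be stated) satisfies condition (1) of \cref{def:level0}, the set $\sett{S}$ is on-target with the uniform exponents $\mu(P)=1$ and is stable. In particular, for the uniform case every multiset $\multiset{M} \in \N^\sett{S}$ has $\cex{\multiset{M}} = |\multiset{M}|$, so for a canonical reaction $\alpha : \multiset{M} \to \multiset{M'}$ the first-level imbalance is $k_1(\alpha) = k(\alpha) = |\multiset{M}| - |\multiset{\hat M'}|$ and the first-level novelty is $l_1(\alpha) = l(\alpha) = |\multiset{M'}| - |\multiset{\hat M'}|$. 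Subtracting, $k(\alpha) - l(\alpha) = |\multiset{M}| - |\multiset{M'}| = e(\alpha)$, exactly the entropy loss; hence $k(\alpha)/l(\alpha) = e(\alpha)/l(\alpha) + 1$ whenever $l(\alpha) > 0$. Taking the minimum over all canonical reactions with $l(\alpha)>0$ shows that the corollary's $\mu_1 = \min_\alpha\{e(\alpha)/l(\alpha)\} + 1$ equals the algorithm's $\mu_1 = \min_\alpha\{k_1(\alpha)/l_1(\alpha)\}$ from \cref{def:ratio}.

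Next I would apply \cref{theorem:level-i} to the stable, uniform on-target set $\sett{S}$: for any $0 < c < 1$ there exist monomer concentrations $\vec{x^0} \in (0,1)^{\monomers}$ such that the configuration assigning each $P \in \polymers$ concentration $c^{\bcex{P}}$ minimizes $g(\vec{x})$ subject to $\vec{A}\cdot\vec{x} = \vec{x^0}$. Since the on-target exponents are uniformly $1$, each $P \in \sett{S}$ gets concentration exactly $c^{1} = c$, which is the first claimed property. For the off-target bound, I would apply \cref{theorem:heuristic} (or directly the reasoning leading to \cref{cor:minbcex}) at level $i=1$: for any $P \notin \sett{S} = \cup_{j=0}^{0}\sett S_j$, we have $\bcex{P} \ge \mu_1$, where $\mu_1$ is the algorithm's value, which we just identified with the corollary's $\mu_1$. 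Since $0 < c < 1$, the map $t \mapsto c^t$ is decreasing, so $c^{\bcex{P}} \le c^{\mu_1}$, giving each off-target polymer concentration at most $c^{\mu_1}$, as required.

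One technical point to be careful about: the hypothesis that $\sett{S}$ satisfies condition (1) of \cref{def:level0} is needed for \cref{lem:TBN-stable} to apply and hence must either be assumed in the corollary statement or be argued to follow from TBN-stability closure together with the standing assumption (implicit in \cref{sec:TBN-applications}) that every polymer in $\polymers$ arises as a reconfiguration of polymers over $\sett{S}$. I would state this explicitly. A second minor point is ensuring that the minimum defining $\mu_1$ is well-defined (attained and finite): this follows because, as remarked after \cref{def:canonical} and detailed in \cref{subsec:Hilbert}, the ratio $k_1/l_1$ over all canonical reactions attains its minimum on a finite Hilbert basis, and stability of $\sett{S}$ guarantees this minimum exceeds $1$, so in fact $\mu_1 > 1$ and the off-target polymers are strictly suppressed relative to the on-target ones. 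I do not expect any serious obstacle here — the whole argument is a specialization and bookkeeping exercise — but the main thing requiring genuine care is the identity $k(\alpha)/l(\alpha) = e(\alpha)/l(\alpha) + 1$ and its reliance on uniformity ($\cex{\multiset{M}} = |\multiset{M}|$ for $\multiset{M}\in\N^\sett{S}$), which is the hinge connecting the combinatorial entropy-loss quantity to the concentration exponents.
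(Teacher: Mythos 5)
Your proposal is correct and follows essentially the same route as the paper: the paper's justification for \cref{cor:user-friendly} is precisely the observation that in the uniform case $e(\alpha) = k(\alpha) - l(\alpha)$, so the minimal $e/l$ ratio plus one equals the algorithm's $\mu_1$, combined with \cref{lem:TBN-stable}, \cref{theorem:level-i}, and \cref{theorem:heuristic}. Your remark that condition (1) of \cref{def:level0} must be assumed (the corollary statement omits it while \cref{lem:TBN-stable} requires it) is a fair and accurate observation about a hypothesis the paper leaves implicit.
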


In other words, if $\sett{S}$ is a TBN-stability closed set of polymers then we can assign concentration $c$ to each of them (uniform assignment). 
Then we consider the ``worst'' way to generate any polymers outside of $\sett{S}$ (i.e., off-target): 
the way that has the smallest entropy loss and the largest novelty.
The ratio of the entropy loss to novelty gives us an upper bound on the concentration exponent $\mu_1$ of any off-target polymer, bounding its concentration by $c^{\mu_1}$.
The smallest entropy loss to generate off-target polymers is already the key item of interest in the TBN literature.
Thus the above corollary helps to bootstrap concentration bounding arguments via existing arguments based on entropy loss.

\section{Example Applications}\label{sec:Examples}
In this section we show several applications of our mathematical tools in the analysis of existing systems of interest in the DNA molecular programming literature.
We base our arguments on previous results proving the entropy loss (quantity $e(\alpha)$ of \Cref{cor:user-friendly}) of the systems in producing off-target (leak) species. 

We note that while existing literature succeeds in characterizing entropy loss via TBN-like combinatorial arguments, more work is needed to develop similarly rigorous combinatorial arguments on novelty (quantity $l(\alpha)$); see also Discussion.
In the examples to follow, we claim to have identified the worst-case canonical reactions---i.e., having the least $e(\alpha)/l(\alpha)$ or $k(\alpha)/l(\alpha)$ ratio---without proof.

We first consider the TBN AND gate introduced in prior work~\cite{DBLP:conf/dna/DotyRSTW17,DBLP:journals/tcs/BreikTHS19} and recently experimentally realized~\cite{Wang2024.09.13.612990}.
\Cref{fig:and_gate} (left) shows the desired functionality of the AND gate in which inputs $A$ and $B$ cooperate to produce $C$.
We are interested in bounding the concentration of $C$ (leak) that can be produced in the absence of inputs $A$ and $B$.
Phrased in our terminology, 
combinatorial TBN arguments have shown that $\sett{S} = \{ X, Y, Z\}$  is TBN-stability closed, as well as $\{ X, Y, Z, A\}$ and $\{ X, Y, Z, B\}$ where one of the two inputs is present.
This implies that any canonical reaction $\alpha$ producing $C$ has entropy loss $e(\alpha) \geq 1$.
However, such arguments do not directly connect entropy loss to leak concentrations,
justifying the need for a tool like our \Cref{cor:user-friendly}.

\begin{figure}[t]
    \centering
    \includegraphics[width=0.85\textwidth]{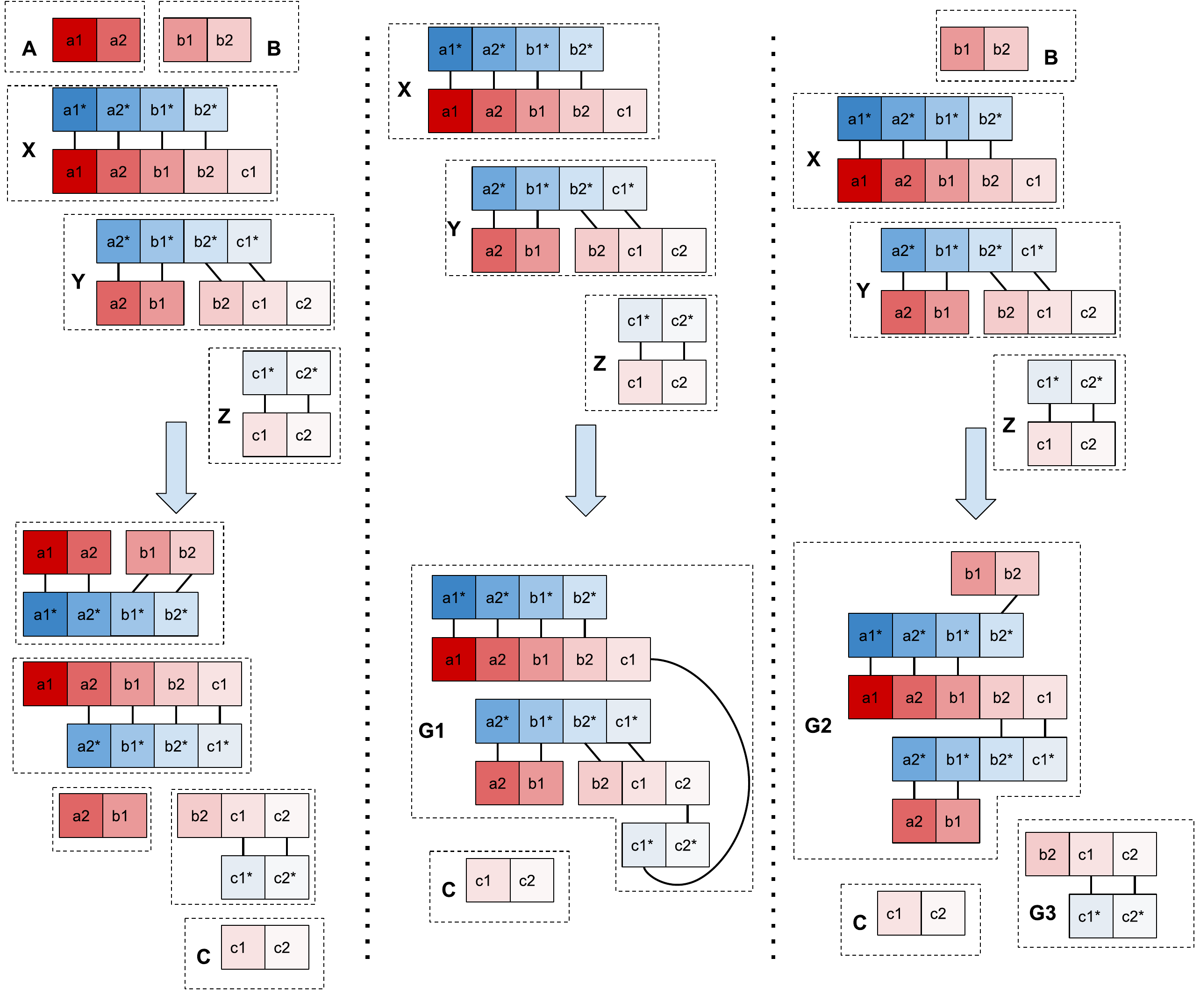}
    \caption{A TBN module implementing an AND gate. The presence of polymers $A$ and $B$ represents input 1, while their absence represents input 0. 
    (Left) The intended reaction pathway generates the output polymer $C$ when both $A$ and $B$ are present.
    (Middle) When both $A$ and $B$ are absent, the worst-case canonical reaction is $\alpha:X + Y + Z \to G_1 + C$ producing an erroneous output $C$ (leak). 
    The entropy loss and novelty of the reaction are $e(\alpha) = 1$ and $l(\alpha) = 2$, resulting in the concentration upper bound of $C$ of $c^{1.5}$ via \Cref{cor:user-friendly}.
    (Right) If only one input is present ($B$), then the worst-case canonical reaction is $\beta:B+X+Y+Z \to G_2 + G_3 + C$ with $e(\beta) = 1$ and $l(\beta) = 3$, yielding the upper bound concentration of $C$ of $c^{1.33}$.
    }
    \label{fig:and_gate}
\end{figure}

To apply our framework to the TBN AND gate, we use \Cref{cor:user-friendly}.
In the absence of both inputs, take $\sett{S} = \{ X, Y, Z\}$.
We claim that the worst-case canonical reaction $\alpha$ producing $C$ is 
$X + Y + Z \to G_1 + C$ shown in \Cref{fig:and_gate} (middle).
This reaction has entropy loss $e(\alpha) = 1$ and novelty $l(\alpha) = 2$ since $G_1$ and $C$ are outside of $\sett{S}$.
Thus for any base concentration $0 < c < 1$, there is an equilibrium with $[X] = [Y] = [Z] = c$ where the leak concentration is $[C] \leq c^{1.5}$.

Similarly, consider the case of having input $B$ but no input $A$. 
We take $\sett{S} = \{ X, Y, Z, B\}$ and claim\footnote{While we do not provide a formal proof that this reaction is worst-case, our confidence is based on the observation that we cannot increase the novelty further without increasing the entropy loss in proportion.
For example, we can combine two copies of reaction $\alpha$ to yield reaction $\alpha'$ but then $e(\alpha') = 2 e(\alpha)$ and $l(\alpha') = 2 l(\alpha)$, maintaining the same ratio.
} 
that the worst-case canonical reaction $\alpha$ producing $C$ is 
$X + Y + Z + B \to G_2 + G_3 + C$ as shown in \Cref{fig:and_gate} (right).
This reaction has entropy loss $e(\alpha) = 1$ and novelty $l(\alpha) = 3$ since $G_2$, $G_3$, and $C$ are outside of $\sett{S}$,
yielding the equilibrium leak concentration of $[C] \leq c^{1.33}$.
Our analysis thus provides concrete polynomial upper bounds on leak, 
consistent with the qualitative expectation of the TBN model that leak should become comparatively negligible in the limit of decreasing $c$.
The bound is smaller when both inputs are absent than when $B$ is present.

As the next example, we consider the ``leakless'' DNA strand displacement system previously theoretically and experimentally studied~\cite{DBLP:journals/pnas/WangB18}.
That work focuses on a family of ``translator'' modules that convert an input strand to an output strand of independent sequence. 
The family is parameterized by the redundancy parameter $N$ defined as the number of bound domains in each fuel polymer $F_i$, such that the number of domains in each signal $X_i$ is $N+1$.
Leak is expected to decrease with decreasing overall concentration (as for the AND gate), as well as with increasing redundancy $N$. 
The decrease of leak with increasing $N$ was confirmed by experiment, at least for small $N$.
The reactions of the translator with $N=3$ are shown in \cref{fig:tbn_translator}.

\begin{figure}[t]
    \centering
    \includegraphics[width=0.77\textwidth]{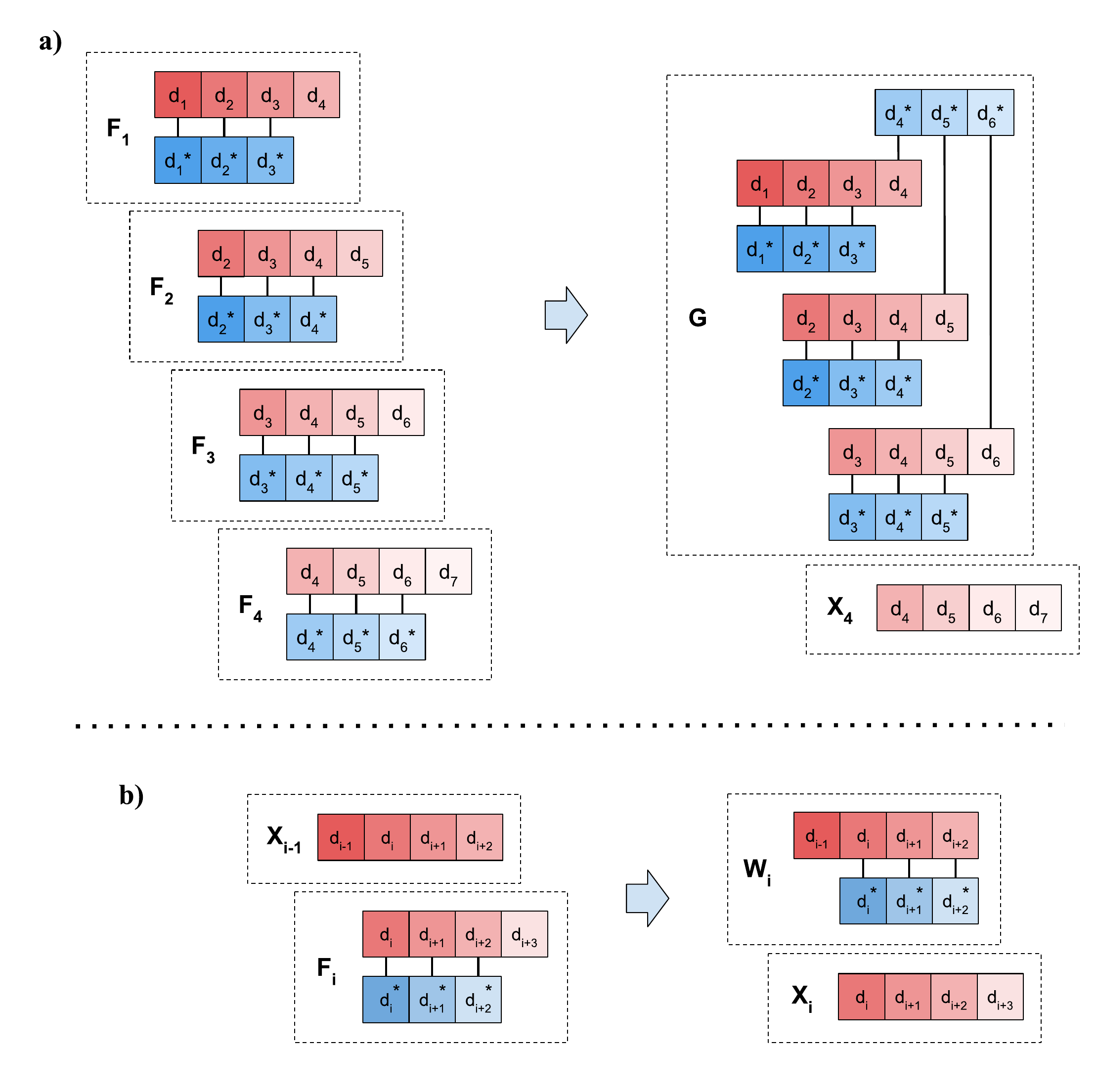}
    \caption{Translator cascade with redundancy parameter $N=3$. Polymer $X_0$ serves as input and polymer $X_n$ as output signal. 
    The leak pathway with $N + 1$ fuels coming together to generate a signal $X_{N+1}$ in the absence of input is described in (a). 
    Part (b) describes the intended reaction for each $i$; iterating them for $i=1,...,N+1$ produces $X_{N+1}$ given input $X_0$. 
    Note that if several translators are composed, then $X_{N+1}$ is the input to the downstream translator and once the leak signal is generated via the pathway shown in (a) it can propagate via intended reactions to the output $X_n$ of the last translator.
    }
    \label{fig:tbn_translator}
\end{figure}

To apply our framework, we consider the system without toeholds, driven solely by entropy;
with long domains alone, we are in the enthalpy neutral (athermic) regime.
Prior work focused on the case where the system was prepared in a state with only fuel polymers ($F_i$), all at equal concentration, and zero initial concentration of waste polymers ($W_i$). 
Let on-target set be $\sett S = \{F_i, W_i \mid i=1, \dots,n\}$ with uniform $\mu = 1$.
Rephrasing in our terminology, ref.~\cite{DBLP:conf/dna/ThachukC15} proved that any canonical reaction $\alpha: \multiset{M} \to \multiset{M'}$ where $X_i \in \multiset{M'}$ has entropy loss $e(\alpha) = N-1$,
and this fact was used to argue that by increasing $N$ we can make leak arbitrarily small.
We now show that considering novelty in addition to entropy loss makes this argument problematic, and suggest an alternative parameter setting to arbitrarily decrease leak.

Consider a cascade of two translators.
Importantly, the number of fuels for a single translator module is $N+1$; i.e., $X_i$ and $X_j$ overlap in sequence for $i < j < i+N+1$ but are completely sequence-independent for $j = i+N+1$.
We claim that the worst-case leak pathway $\alpha$ is where the first translator leaks resulting in the triggering of the second translator.
This pathway generates $l(\alpha) = N+3$ new off-target polymers: 
$1$ for the leaked upstream translator (all fuels coming together), 
$N+1$ for the triggered fuels of the second translator, and $1$ for the output.
Thus $e(\alpha)/l(\alpha) = (N-1)/(N+3)$.
By \cref{cor:user-friendly}, the concentration of the leak product in the uniform setting is bounded above by $c^{4/3}$ for $N=3$.
The bound tightens to $c^2$ with increasing $N$; however, it does not get arbitrarily small.
This suggests that we do not decrease equilibrium leak concentration arbitrarily by increasing ``redundancy'' $N$ because while the entropy loss increases, the novelty increases as well.

To arbitrarily decrease leak, we propose to use positive initial concentrations of the waste polymers $W_i$. 
The fuel polymers are denoted $F_1, ..., F_n$, and the waste polymers produced after translator triggering are $W_1, ..., W_n$. 
Let $X_0$ represent the input and $X_n$ represent the output. 
We have $n=2(N+1)$ for two translators composed together.
This cascade of length $n$ proceeds through the following reactions described in \cref{fig:tbn_translator}:
\[
X_{i-1} + F_i \to X_i + W_i
\]
for $i=1,...,n$.

We first consider the triggered system (with-input) showing at least a constant fraction of the signal is propagated to the end as $X_n$.
We then focus on the case without input and bound the leak.
Note that this analysis utilizes \Cref{theorem:level-i} directly rather than \Cref{cor:user-friendly} because we will have different concentration exponents $\mu$ for $F_i$ and $W_i$ in $\sett{S}$.

For the triggered (with-input) system we define our on-target set as $\sett S = \{F_i, W_i, X_0, X_i \mid i=1, \dots,n\}$.
We assign concentrations to the on-target polymers as follows: 
All fuel concentrations are equal to $2c$, and all waste concentrations are equal to $c$. These concentrations correspond to concentration exponents $\cex{F_i}=1 + \log_c 2$ and $\cex{W_i}=1$.
Let the concentration of the output in the final layer $X_n$ be $y$ and assign balancing concentrations of the other $X_i$. 
Since $[F_i]/[W_i] = 2$, we have $\sum_{i=0}^{n} [X_i] < 2y$, meaning that more than  half of the total signal ($X_i$) is at the output layer.

Now, we investigate the system without input. Let $\sett{S} = \{F_i, W_i \mid i = 1, \ldots, n\}$. 
For the situation to properly correspond to the with-input case, 
we need to ensure that all monomer concentrations are the same between the two cases, except removing the monomer corresponding to input $X_0$.
Rather than thinking about specific monomers, we start in the with-input case and conceptually run reactions $X_i + W_i \to X_{i-1} + F_i$ to completely push all $X_i$ to $X_0$, and then remove $X_0$ from the system.\footnote{More precisely, this ensures that the concentrations of monomers making up on-target polymers are the same between the with-input and without-input case (other than $X_0$).
}
Since the total amount of all $X_i$ is less than $2y$ in the with-input case, this results in: $[F_i] < 2c + 2y$ and $[W_i] > c - 2y$.
These correspond to 
$\cex{F_i} > \log_c(2c + 2y)$ and $\cex{W_i} < \log_c(c - 2y)$.

Recall that redundancy $N$ results in entropy penalty $N-1$.
We claim that the reaction with the smallest imbalance-novelty ratio (i.e., worst-case) is reaction $\beta$: 
\[
F_1 + \cdots + F_n \; \to \; G + W_{N+2} + \cdots + W_n + X_n,
\]
where $G$ is the ``large polymer'' formed after all $F_1, \ldots, F_N$ displace the top strand from $F_{N+1}$, and $X_n$ is the leak output.
The imbalance of this reaction is:
\begin{equation} \label{eqn:imbalance-bound}
k(\beta) = \sum_{i=1}^n \cex{F_i} - \sum_{i=N+2}^n \cex{W_i} > n \log_c (2c+2y) - \frac{n}{2} \log_c (c-2y).
\end{equation}

We can ensure that $k(\beta)$ is at least a constant fraction of $n$ for small enough $c$.
For example, if we let $y \leq c / 4$, then $k(\beta) \geq n/4$ for any $c \leq 0.0064$.
The novelty is independent of $n$: $l(\beta) = 2$ since $G$ and $X_n$ are not in $\sett{S}$.
Therefore, the imbalance-novelty ratio $k(\beta)/l(\beta)$ of the worst case reaction is at least $n/8$, which increases linearly with $N$ (recall $n=2(N+1)$ for two translators composed together).
Applying \Cref{theorem:level-i} leads to leak concentration of at most $c^{n/8} = c^{1/4+N/4}$. 
This upper bound\footnote{Note that this upper bound is loose because of  inequalities such as \Cref{eqn:imbalance-bound}.}
implies smaller-than $c$ concentration of leak for $N \geq 4$, with the leak exponentially decreasing for larger $N$.

To summarize, by increasing redundancy $N$ in the appropriate regime, we maintain the property that a constant fraction of the input is converted to output in the with-input case, while arbitrarily (exponentially in $N$) decreasing leak in the without-input case.

\section{Discussion}
\label{sec:discussion}

Our results suggest a few important directions for future work.
Given the central role of worst-case canonical reactions---i.e., canonical reactions with the lowest imbalance-novelty ratio (for \Cref{alg1} and \Cref{theorem:level-i}) or entropy loss-novelty ratio (\Cref{cor:user-friendly})---it is important to develop formal techniques to prove that a given canonical reaction is indeed  worst-case overall, at least or for a particular off-target polymer (for \Cref{sec:bounding-framework}). 
{Note that while combinatorial techniques in prior work in the TBN model have focused on proving entropy loss, more work is needed to study the ratio directly, making our framework more easily applicable.}
While we believe the canonical reactions highlighted in \Cref{sec:Examples} are indeed worst-case, the argument is informal.

Another promising avenue of research is to establish a more direct link between a polymer’s monomer composition and its equilibrium concentration. 
Our current framework is effectively reaction-centric, inferring concentrations based on how polymers transform into one another. 
An alternative approach could be to derive concentration bounds directly from the structural properties of the off-target polymers, such as their size (monomer count) or degree of overlap with one another (multiset difference).
Nonetheless, we hope that a variety of structure-based results could be proven based on a reduction to our canonical reaction framework.

Finally, this work has focused exclusively on the athermic case, where all molecular interactions are enthalpically neutral.  
While this is a reasonable and useful abstraction for systems with strong, saturated bonds,
many real-world molecular systems, including many popular in DNA molecular programming, involve a range of binding strengths and enthalpic effects (e.g., from toehold binding). 
Extending our algorithmic framework to incorporate user-specified $\Delta G$'s for each polymer could significantly broaden its applicability,
and although this would complicate our algorithm, we do not anticipate any insurmountable difficulties.

\bibliography{main}

\appendix

%
%

\section{Hilbert Basis Implementation}\label{subsec:Hilbert}
This section recalls some results on Hilbert bases and explains how to use the Hilbert basis in \cref{alg1}.
While this section makes the main theorem mathematically rigorous,
most readers can safely skip this section.

The main contents of this section are (1) the termination of \cref{alg1} and (2) the use of minimum (versus infimum) in \cref{def:ratio}. 
We address both concerns by showing that, although there are infinitely many canonical reactions, we can always restrict our attention to a finite subset of them in our analysis and algorithm.

For integral vectors $\vec v_1,...,\vec v_m \in \Z^n$, the set $C= \{\sum_{i=1}^m b_i \vec v_i: b_1,...,b_m \ge 0\}$ is called a (rational polyhedral) cone, which is also known to be described by a system of inequalities $C=\{\vec v: \vec B\cdot \vec v\le 0\}$ for some matrix $\vec B\in \Z^{l\times n}$. 
It is known that the set $C \cap \Z^n$ has a \emph{finite} subset $\mathcal H(C)=\{\vec h_1 ,..., \vec h_t\}$, called \emph{Hilbert basis} of $C$, that generates $C\cap \Z^n$ with non-negative integer coefficients, that is, for any $\vec v \in C \cap \Z^n$ there are $a_1,...,a_t\in \N$  such that 
$
\vec v = \sum_{i=1}^t a_i \vec h_i.
$

The set of canonical reactions $\Lambda$ can be precisely described in terms of a rational polyhedral cone. 
For a canonical reaction $\alpha:\multiset{M_1} \to \multiset{M_2}$, we define a vector 
\[\vec v_\alpha =(\multiset{M_1}[P]-\multiset{M_2}[P])_{P\in \polymers}\in \Z^{|\polymers|}\]
capturing the stoichiometric change of concentrations due to reaction $\alpha$.
Note that $\multiset{M_1}[P']-\multiset{M_2}[P'] = -\multiset{M_2}[P'] \le 0$ for $P'\notin \sett S$, thus
$\vec v_\alpha \cdot \vec e_{P'} \le 0 \text{ for }\vec e_{P'} = (\delta_{PP'})_{P \in \polymers}$
for the delta function $\delta_{ij}=1$ iff $i=j$.
\cref{def:reconfiguration} ensures that this vector must satisfy the condition $\vec A\cdot  \vec v_\alpha = 0$.
Combining these, the cone
\[
C^\sett S = \{\vec v_\alpha \in \R^{|\polymers|}:\vec A \cdot \vec v_\alpha \ge 0  \text{ and } \vec A \cdot \vec v_\alpha \le 0 \text{ and } \vec v_\alpha \cdot \vec e_{P} \le 0 \text{ for all }P\notin \sett S\}
\]
characterizes the canonical relations: $C^\sett S \cap \Z^{|\polymers|}$ is the set of vectors $\vec v_\alpha$.
Therefore, there exists a finite set of canonical reactions $H$ that corresponds to the Hilbert basis $\mathcal H(C^\sett S)$.

The following lemma implies that for the purposes of our algorithm and analysis, we can focus on the Hilbert basis $H$ of canonical reactions.
\begin{lemma}\label{lemma:canonical-merging}
    Let $\alpha:\multiset{M_1} \to \multiset{M_2}$ and $\beta:\multiset{N_1} \to \multiset{N_2}$ be canonical reactions with $l_i(\alpha),l_i(\beta)\neq 0$. Then, for any $a,b \in \N$, it holds that
    \begin{equation}\label{eqn:mediant}
    \frac{k_i(a\cdot \alpha + b\cdot \beta)}{l_i(a\cdot \alpha + b\cdot \beta)} \ge \min\left(
    \frac{k_i(\alpha)}{l_i(\alpha)},\frac{k_i(\beta)}{l_i(\beta)}
    \right).
    \end{equation}
    The equality holds only when $a=0$, $b=0$, or $k_i(\alpha)/l_i(\alpha) = k_i(\beta)/l_i(\beta)$.
\end{lemma}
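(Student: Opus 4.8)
The plan is to prove \cref{eqn:mediant} by showing that the ratio $k_i(\gamma)/l_i(\gamma)$ for a nonnegative integer combination $\gamma = a\cdot\alpha + b\cdot\beta$ is a \emph{weighted mediant} of the two ratios $k_i(\alpha)/l_i(\alpha)$ and $k_i(\beta)/l_i(\beta)$, and that any such mediant lies between the two values (with equality characterized as stated). First I would verify the two key linearity facts: that $k_i$ is additive on nonnegative integer combinations of canonical reactions, i.e. $k_i(a\cdot\alpha + b\cdot\beta) = a\,k_i(\alpha) + b\,k_i(\beta)$, and likewise $l_i(a\cdot\alpha + b\cdot\beta) = a\,l_i(\alpha) + b\,l_i(\beta)$. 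Additivity of $k_i$ follows from $\bar\mu$ being additive over multisets together with the fact that, for a canonical reaction, $\hat{\multiset{M_2}}$ is exactly the part of the product multiset lying in $\bigcup_{j<i}\sett S_j$, so this truncation commutes with taking multiset unions; additivity of $l_i$ follows similarly since $l_i(\alpha) = |\multiset{M_2}| - |\multiset{\hat M_2}|$ and cardinality and the truncation are both additive. (One needs to observe that reactants stay in $\sett S$ under summation so $\gamma$ is again canonical, which is the remark preceding \cref{def:canonical}.)

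Granting these, write $p = k_i(\alpha)$, $q = l_i(\alpha) > 0$, $r = k_i(\beta)$, $s = l_i(\beta) > 0$, so the claim becomes
\[
\frac{ap + br}{aq + bs} \ge \min\left(\frac{p}{q}, \frac{r}{s}\right),
\]
with $a,b\in\N$ and $aq+bs>0$ (which holds unless $a=b=0$; that degenerate case is excluded since then $l_i(\gamma)=0$, matching the ``$a=0$ or $b=0$'' clause vacuously, or rather one interprets the statement only when the denominator is nonzero). Assume without loss of generality $p/q \le r/s$. Then $ps \le qr$, and a direct computation gives $(ap+br)q - (aq+bs)p = b(rq - ps) \ge 0$ wait -- more cleanly, $(ap+br) - \frac{p}{q}(aq+bs) = br - \frac{p}{q}bs = \frac{b}{q}(rq - ps) \ge 0$, so $\frac{ap+br}{aq+bs} \ge \frac{p}{q} = \min(p/q, r/s)$, using $aq+bs>0$. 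Equality forces $b(rq-ps)=0$, i.e. $b=0$ or $rq=ps$ (the latter being $k_i(\alpha)/l_i(\alpha) = k_i(\beta)/l_i(\beta)$); by the symmetric argument equality also forces $a=0$ or the ratios to be equal, which together give exactly the stated equality condition.

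I expect the main obstacle to be the bookkeeping in establishing additivity of $k_i$ and $l_i$ cleanly -- in particular making precise that the ``truncation to previously-levelized polymers'' operation $\multiset{M_2}\mapsto\multiset{\hat M_2}$ is additive, and that the reactant side of a sum of canonical reactions remains supported on $\sett S$ so that the combined reaction is still canonical and the formulas of \cref{def:imbalance_novelty} apply verbatim. Once that linearity is in hand, the inequality itself is the elementary mediant estimate above and the equality analysis is immediate. I would also remark that iterating \cref{eqn:mediant} shows that for any canonical reaction $\gamma$, written via the Hilbert basis $H$ as a nonnegative integer combination $\gamma = \sum_{h\in H} a_h\, h$, we have $k_i(\gamma)/l_i(\gamma) \ge \min_{h\in H,\, l_i(h)\neq 0} k_i(h)/l_i(h)$, so the minimum in \cref{def:ratio} is attained on the finite set $H$ and is a genuine minimum rather than an infimum, and \cref{alg1} terminates -- which is the stated purpose of this section.
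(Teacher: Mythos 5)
Your proposal is correct and follows essentially the same route as the paper's proof: establish linearity of $k_i$ and $l_i$ over nonnegative integer combinations of canonical reactions, then reduce \cref{eqn:mediant} to the (weighted) mediant inequality. You simply spell out the linearity bookkeeping and the equality analysis in more detail than the paper, which states these as immediate.
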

\begin{proof}
It is not hard to see that $k_i$ and $l_i$ are linear, i.e., $k_i(a \cdot \alpha + b \cdot \beta) = a\cdot k_i(\alpha) + b\cdot k_i(\beta)$ and $l_i(a \cdot \alpha + b \cdot \beta) = a\cdot l_i(\alpha) + b\cdot l_i(\beta)$. Then, \cref{eqn:mediant} is identical to the mediant inequality, which states that for $p,q,r,s\ge0$ with $q,s\neq 0$ it holds that $\min(p/q,r/s) \le (p+r)/(q+s).$
\end{proof}
Consider a canonical reaction $\alpha=
a_1 \cdot \eta_1 + ... + a_t \cdot \eta_t$ for $a_1,...,a_t >0$ and $\eta_1,...,\eta_t \in H$. 
If $\alpha$ is $i$-levelizing, then the equality $k_i(\alpha)/l_i(\alpha)=\mu_i=\min_{j=1,..,t}(k_i(\eta_j)/l_i(\eta_j))$ must hold, and the equality condition above ensures that $\mu_i = k_i(\eta_j)/l_i(\eta_j)$ for all $j=1,...,t$. 
In other words, the set of $i$-levelizing reactions is
\[
\left\{\sum_{j=1}^t a_j \cdot \eta_j: a_1,...,a_t \in \N\right\} \text{ where }\{\eta_1,...,\eta_t\}\text{ is the set of $i$-levelizing reactions in }H.
\]
This allows us to inspect the minimum over the finite set $H$ instead of $\Lambda$ in \cref{def:ratio}.
Similarly, as the Hilbert basis can be computed in finite time and implemented in \cite{Normaliz},
we can run \cref{alg1} with guaranteed termination by computing the minimum over $H$.

\section{Detailed Balance and Equilibrium}\label{appndx:equilibrium}

Recall that $\vec{A} \in \N^{|\monomers| \times |\polymers|}$ is the matrix such that each entry $A_{ij}$ specifies the number of monomers of type $i$ in polymer $j$, and that $\vec A \cdot \vec x =\vec {x^0}$ (\Cref{eqn:constraints}) captures the mass-conservation constraint.

\begin{theorem}
    Let $\vec {x^0} \in (0,1)^{\monomers}$ be a fixed vector of monomer concentrations.
    If all reactions are balanced at the configuration $\vec x \in (0,1)^{\polymers}$ of polymer concentrations,
    then the cost function $g(\vec x)$ is minimum subject to $\vec A \cdot \vec x =\vec {x^0}$.
\end{theorem}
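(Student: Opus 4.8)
The plan is to show that the cost function $g$ is strictly convex on the feasible polytope $\{\vec x : \vec A \cdot \vec x = \vec x^0,\ \vec x > 0\}$, so that any critical point of the constrained problem is the unique global minimum, and then to verify that a detailed-balanced configuration is exactly such a critical point via the Lagrangian / KKT conditions. First I would compute the gradient of $g$: since $g(\vec x) = \sum_{P} x_P(\log x_P - 1)$, we have $\partial g / \partial x_P = \log x_P$. The Hessian is diagonal with entries $1/x_P > 0$, so $g$ is strictly convex wherever $\vec x > 0$; hence the restriction of $g$ to the (convex) feasible set is strictly convex and has at most one minimizer, and any stationary point of the Lagrangian is that minimizer.

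Next I would set up the Lagrangian $L(\vec x, \vec \lambda) = g(\vec x) - \vec \lambda^\top(\vec A \cdot \vec x - \vec x^0)$ with multipliers $\vec \lambda \in \R^{\monomers}$ (one per monomer). The stationarity condition $\partial L / \partial x_P = 0$ reads $\log x_P = \sum_{m \in \monomers} \lambda_m A_{mP} = \sum_{m} \lambda_m P[m]$, i.e. $\log x_P = \sum_m \lambda_m P[m]$ for every polymer $P$. So it suffices to exhibit such a vector $\vec \lambda$ given that all reactions are balanced. The idea is to read off the $\lambda_m$ from the ``effective chemical potential'' of monomers: for each monomer $m$, pick any polymer (or combination) and define $\lambda_m$ consistently. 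More carefully, balance of a reaction $\multiset{M_1} \to \multiset{M_2}$ means $\sum_{P} \multiset{M_1}[P] \log x_P = \sum_P \multiset{M_2}[P] \log x_P$, i.e. the linear functional $P \mapsto \log x_P$ vanishes on every reconfiguration vector $\vec v_\alpha$ (which spans $\ker \vec A$ restricted to $\Z^{\polymers}$, or rather the lattice of reconfigurations). Therefore the row vector $(\log x_P)_{P \in \polymers}$ is orthogonal to $\ker \vec A$, hence lies in the row space of $\vec A$, which gives precisely a $\vec \lambda$ with $\log x_P = (\vec A^\top \vec \lambda)_P = \sum_m \lambda_m P[m]$. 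This is the stationarity condition, and by strict convexity plus feasibility of $\vec x$ (which is assumed, since $\vec A \cdot \vec x = \vec x^0$), $\vec x$ is the unique constrained minimizer.

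The one technical point to handle carefully is the claim that the reconfiguration vectors span $\ker \vec A$ over $\R$ (or at least that ``$(\log x_P)$ annihilates all reconfigurations'' implies ``$(\log x_P) \in \mathrm{rowspace}(\vec A)$''). The reconfiguration differences $\vec v_\alpha = \multiset{M_1} - \multiset{M_2}$ with $\vec A \vec v_\alpha = 0$ are exactly the integer vectors in $\ker \vec A$, and these $\Z$-span $\ker \vec A \cap \Z^{\polymers}$, which $\R$-spans $\ker \vec A$; so a linear functional vanishing on all of them vanishes on $\ker \vec A$, and a functional vanishing on $\ker \vec A$ is in $(\ker \vec A)^\perp = \mathrm{rowspace}(\vec A^\top)^\perp{}^\perp = \mathrm{rowspace}(\vec A)$. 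The other mild point is just boundary behavior: since we only optimize over $\vec x$ with $\vec x \in (0,1)^{\polymers}$ and $g$ is smooth and strictly convex there, and a feasible strictly-positive stationary point exists, we need not worry about minimizers escaping to the boundary $x_P \to 0$ — the stationary point found is the global min on the relative interior, and that is what the theorem asserts. I expect the span argument (reconfigurations generate $\ker \vec A$) to be the only place needing a sentence of justification; everything else is the standard convex-duality computation. Concretely the write-up would be:

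\begin{proof}[Proof sketch]
The function $g$ has gradient $\nabla g(\vec x) = (\log x_P)_{P\in\polymers}$ and Hessian $\mathrm{diag}(1/x_P)$, which is positive definite on $(0,1)^\polymers$; hence $g$ is strictly convex on the feasible set, which is convex, so it has at most one minimizer and any point satisfying the first-order (Lagrange) conditions is that minimizer. The Lagrange condition for $\min g$ subject to $\vec A\vec x = \vec x^0$ is that $\nabla g(\vec x)$ lie in the row space of $\vec A$, i.e.\ $\log x_P = \sum_{m\in\monomers}\lambda_m P[m]$ for some $\vec\lambda$. Now, $\vec x$ being balanced for every reaction means, by \cref{def:reconfiguration} and the detailed-balance equality (after taking logarithms), that the linear functional $\vec w \mapsto \sum_P w_P \log x_P$ vanishes on $\multiset{M_1} - \multiset{M_2}$ for every reconfiguration pair; these integer vectors are exactly $\ker\vec A\cap\Z^\polymers$, which spans $\ker\vec A$ over $\R$. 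Thus $(\log x_P)_P \in (\ker\vec A)^\perp = \mathrm{rowspace}(\vec A)$, giving the required $\vec\lambda$. Since the assumed $\vec x$ is feasible and strictly positive, it is the unique minimizer of $g$ subject to $\vec A\vec x = \vec x^0$.
\end{proof}
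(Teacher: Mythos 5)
Your proposal is correct and follows essentially the same route as the paper's proof: strict convexity of $g$ (diagonal positive-definite Hessian) plus the observation that detailed balance makes the balanced configuration a critical point of the constrained problem, using the fact that the reconfiguration vectors span $\ker \vec A$. Your Lagrange-multiplier/row-space formulation of stationarity is just the dual phrasing of the paper's vanishing-directional-derivative argument along $\ker \vec A$, and your extra sentence justifying that the integer reconfiguration vectors $\R$-span $\ker \vec A$ is a welcome (slightly more careful) version of a step the paper asserts by reference to its definition of reconfiguration.
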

\begin{proof}
    (Sketch) The function $g$ is strictly convex since its Hessian $H$ is positive definite (specifically diagonal with $H_{jj} = 1/x_j > 0$).
    Strict convexity of $g$ implies that the local minimum of $g$ becomes the unique (global) minimum.

We associate a vector $\vec{v}_\alpha \in \Z^{|\polymers|}$ with every reaction $\alpha$, capturing the net stoichiometric effect of reaction $\alpha$.\footnote{This is the same vector $\vec{v}_\alpha$ as defined in \Cref{subsec:Hilbert}. 
For example, $(1,-1,0,...)$ corresponds to $X_1 \to X_2$ for $\Psi = \{X_1,X_2,...\}$.}
It is straightforward to show that the function $g$ along with the direction of $\vec{v}_\alpha$ has zero derivative at $\vec x$ if and only if the reaction $\alpha$ is balanced at $\vec x$. More explicitly, for $\alpha:\multiset{M_1}\to\multiset{M_2}$, the $P$-th entry of the vector $\vec v_\alpha$ is $(\multiset{M_1}[P]-\multiset{M_2}[P])$.
The directional derivative
$D_{\vec v_\alpha} g(\vec x) = \sum_P (\multiset{M_1}[P]-\multiset{M_2}[P]) \log x_P = 0$ implies that $\prod_{P\in \multiset{M_1}} {x_P}^{\multiset{M_1}[P]} = \prod_{P\in \multiset{M_2}} {x_P}^{\multiset{M_2}[P]}$ holds, which means the reaction $\alpha$ is balanced.

The set $\{\vec x: \vec A \cdot \vec x = 0\}$ is spanned by the vectors $\vec{v}_\alpha$ for all reactions by \Cref{def:reconfiguration}. 
Therefore, if all reactions are balanced at $\vec x$, then any directional derivative at $\vec x$ vanishes and $\vec x$ is a critical point, which is the unique minimum.
\end{proof}

\end{document}